\newtheorem{theorem}{Theorem}[section]
\newtheorem{lemma}[theorem]{Lemma}
\newtheorem{example}[theorem]{Example}
\newcommand{\C}{\mathcal{C}}
\begin{document}
	
	\title{Four infinite families of ternary cyclic codes with a square-root-like lower bound\thanks{C. Ding's research was supported by The Hong Kong Research Grants Council, Proj. No. $16302121$.
C. Li's research was supported by the National
Natural Science Foundation of China (12071138) and Shanghai Natural Science Foundation (22ZR1419600).
Z. Sun's research was supported by The National Natural Science Foundation of China under Grant Number 62002093.
}}	
\author{Tingfang Chen\thanks{T. Chen is with the Department of Computer Science
                           and Engineering, The Hong Kong University of Science and Technology,
Clear Water Bay, Kowloon, Hong Kong, China. Email: tchenba@connect.ust.hk}, \and
Cunsheng Ding\thanks{C. Ding is with the Department of Computer Science
                           and Engineering, The Hong Kong University of Science and Technology,
Clear Water Bay, Kowloon, Hong Kong, China. Email: cding@ust.hk}, \and
Chengju Li\thanks{C. Li is with the Shanghai Key Laboratory of Trustworthy Computing, East China Normal University,
Shanghai, 200062, China; and also with the National Mobile Communications Research Laboratory, Southeast University, China.  Email: cjli@sei.ecnu.edu.cn}, \and
Zhonghua Sun\thanks{Z. Sun is with the School of Mathematics, Hefei University of Technology, Hefei, 230601, Anhui, China. Email:  sunzhonghuas@163.com}
}
	\date{\today}
	\maketitle
	
\begin{abstract}
Cyclic codes are an interesting type of linear codes
and have wide applications in communication and storage systems
due to their efficient encoding and decoding algorithms. Inspired by the recent work on binary cyclic codes published in IEEE Trans. Inf. Theory, vol. 68, no. 12,
pp. 7842-7849, 2022, and the arXiv paper  arXiv:2301.06446,
the objectives of this paper are the construction and analyses of
four infinite families of ternary cyclic codes with length $n=3^m-1$ for odd $m$ and dimension $k \in \{n/2, (n + 2)/2\}$ whose minimum distances have a square-root-like lower bound. Their  duals have parameters $[n, k^\perp, d^\perp]$,
where $k^\perp \in \{n/2, (n- 2)/2\}$ and $d^\perp$ also has a square-root-like lower bound. These families of codes and their
duals contain distance-optimal cyclic codes.
	\end{abstract}
	
	\begin{IEEEkeywords}
	Cyclic code, linear code.
	\end{IEEEkeywords}

%\newpage 	
%	\tableofcontents

\section{Introduction} \label{sec-intro}

In this paper, we assume that the reader is familiar with the basics of linear codes and cyclic codes over finite fields
\cite{Charpin,HP03} and do not introduce them here. 
Let $\C$ be an $[n, k, d]$ linear
code over $\mathbb{F}_q$. $\C$ is said to be \textit{distance-optimal}  if there does not exist an $[n, k, d' \geq d+1]$  linear code over $\mathbb{F}_q$. A code is called an optimal code if it is
 distance-optimal or meets a bound for linear codes. If a lower bound on the minimum weight $d$ of $\C$
is close to $\sqrt{n}$, we then call it a
\emph{square-root-like lower bound}.

Cyclic codes  have wide applications in communication and storage systems
due to their efficient encoding and decoding algorithms \cite{HP03}. They are also important in theory, as they are closely
related to algebra, algebraic geometry, algebraic function fields, algebraic number theory, association schemes,
combinatorics, finite fields, finite geometry, graph theory, and group theory. There are a lot of references on 
cyclic codes over finite fields. However, no infinite family of ternary cyclic codes with parameters 
$[n, k \in \{n/2, (n\pm 2)/2\}, d\geq \sqrt{n}]$ is seen in the literature. It is very difficult to design 
such an infinite family of cyclic codes, as the dimension is about half of the length of each code in the family. These are the motivations of this paper. 

Recently, several infinite families
of binary cyclic codes with parameters $[n, (n \pm 1)/2, d]$ were constructed, where the minimum distance $d$
has a lower bound which is quite close to $\sqrt{n}$ \cite{TD22,LiuLiDing23,LLQ23}. Inspired by the recent results
in \cite{TD22,LiuLiDing23,LLQ23}, in this paper we construct four infinite families of ternary cyclic codes with parameters
$[n, k, d]$, where $k \in \{n/2, (n \pm 2)/2\}$ and the minimum distance $d$
has a lower bound which is quite close to $\sqrt{n}$.
Their duals have parameters $[n, k^\perp, d^\perp]$, where $k^\perp \in \{n/2, (n \pm 2)/2\}$ and the minimum distance $d^\perp$ also
has a lower bound which is quite close to $\sqrt{n}$.
These four families of codes and their duals contain distance-optimal cyclic codes.
The authors are not aware of such an infinite family of ternary cyclic codes
in the literature.

The rest of this paper is organized as follows. In Section \ref{sec-construction}, we give a general construction of ternary cyclic codes.
In Section \ref{sec-dimension}, we settle the dimensions of the four infinite families of ternary cyclic codes.
In Sections \ref{sec-four1}-\ref{sec-four4}, we develop lower bounds on their minimum distances.
In Section \ref{sec-dual}, we investigate the parameters of their dual codes.
In Section \ref{sec-conclude}, we conclude this paper and make some remarks.
	
\section{A general construction of ternary cyclic codes}\label{sec-construction}

Let  $\mathbb{Z}_n =\{0,1,2,\ldots,n-1\}$ be the ring of integers modulo $n$.
For any integer $b$, let $b \bmod n$ denote the unique $b_0 \in \mathbb{Z}_n$ such that $b \equiv b_0 \pmod n$.
For any $s \in \mathbb{Z}_n$, the $3$-cyclotomic coset of $s$ modulo $n$ is defined by
	$$C_s^{(3,n)}=\{s,s3,s3^2,\ldots,s3^{l_s-1}\}\bmod n \subseteq \mathbb{Z}_n,$$
	where $l_s$ is the smallest positive integer such that $s \equiv s3^{l_s}\pmod  n$.
For an integer $i$ with $0 \le i \le 3^m-1$, let $$i=i_{m-1}3^{m-1} + i_{m-2}3^{m-2} + \cdots + i_1 3+i_0$$ be the $3$-adic expansion of $i$, where
	$i_j \in \{0, 1, 2\}$ for $0 \le j \le m-1$.
For any $i$ with $ 0 \leq i \leq n-1 $, define $w_{3}(i) = \sum\limits_{j=0}^{m-1}i_{j}$, which is called the
$3$-weight of $i$.

From now on, we fix $n=3^m-1$ and let $\alpha$ denote a primitive element of $\mathbb{F}_{3^m}$, where $m \geq 2$.
Define a polynomial
\begin{equation} \label{equ-gi1i2m}
	g_{(i_1,i_2,m)}(x) =  \prod\limits_{
		\substack{1 \leq j \leq n-1\\
			w_{3}(j) \equiv i_1 \text{ or } i_2 \pmod{4}}}
	(x-\alpha^{j}),
\end{equation}
where $i_1$ and $i_2$ are a pair of distinct elements in the set $\{0,1,2,3\}$.
It is easily seen that $g_{(i_1,i_2,m)}(x) \in \mathbb{F}_3[x]$, as $\omega_3(j)$ is a constant on each $3$-cyclotomic
coset $C_s^{(3,n)}$. Let $\mathcal C_{(i_1,i_2,m)}$ denote the ternary cyclic code of length $n=3^m-1$ with generator polynomial $g_{(i_1,i_2,m)}(x)$.
 Denote
   $$T_{(i_1,i_2,m)} = \{ 1 \leq j \leq n-1: \ \ w_{3}(j) \equiv i_1 \text{ or } i_2 \pmod{4}\}.$$
   It is  clear that $T_{(i_1,i_2,m)}$ is the defining set of $\mathcal C_{(i_1,i_2,m)}$ with respect to the $n$-th primitive root of unity $\alpha$.

Notice that the codes $\C_{(0,2,m)}$ and $\C_{(1,3,m)}$ have very poor minimum distances when $m=3$.
These codes may not be interesting.  In this paper, we study $\C_{(0,3,m)}$, $\C_{(1,2,m)}$,
$\C_{(0,1,m)}$ and $\C_{(2,3,m)}$  only for odd $m$,
as the parameters of these four codes for some even $m$ are not the best known.   We will also study the parameters of the duals of these ternary cyclic codes.

\section{The dimensions of the ternary codes $\C_{(i_1,i_2,m)}$} \label{sec-dimension}

In this section, we  always assume that $m$ is odd and we mainly determine the dimensions of the ternary codes $\C_{(0,3,m)}$, $\C_{(1,2,m)}$,
$\C_{(0,1,m)}$ and $\C_{(2,3,m)}$. To this end, we compute the cardinalities of the sets
$$T_i = \{ 1 \leq j \leq n-1: \ \ w_{3}(j) \equiv i\pmod{4}\},$$
where $i \in\{0,1,2,3\}$.

Clearly, we see that
$$T_0  \cup T_2=\{ 1 \leq j \leq n-1: \ \ w_{3}(j) \equiv 0\pmod{2}\},$$
$$T_1 \cup T_3=\{ 1 \leq j \leq n-1: \ \ w_{3}(j) \equiv 1\pmod{2}\},$$ and both are disjoint unions.
It is easy to see that $w_{3}(j) \equiv j\pmod{2}$. Then we have
\begin{equation}\label{equ-sumTi}|T_0 \cup T_2|=|T_0|+|T_2|=(3^m-3)/2,\quad |T_1 \cup T_3|=|T_1|+|T_3|=(3^m-1)/2.\end{equation}
%Note that $m$ is an odd integer and $w_3(i)+w_3(n-i)=2m \equiv 2 \pmod 4$ for each $i$ with $1\le i\le n-1$. Hence $i\in T_0$ if and only if $n-i\in T_2$, which implies that
%$$|T_0|=|T_2|=(3^m-3)/4.$$

We begin to determine $|T_i|$ for $i\in\{0,1,2,3\}$.
Consider the polynomial $(1+x+x^2)^m \in \mathbb{Z}[x]$.
It is easy to see the coefficient of $x^t$ in $(1+x+x^2)^m$ is equal to
\begin{equation}\label{equ-k1k2}\sum_{\substack{k_1, \ k_2\ge 0 \\ k_1+2k_2=t}}{m\choose k_1}{m-k_1\choose k_2},\end{equation}
 where $t=0,1,\cdots,2m$.

Let $\omega=e^{\frac{2\pi \sqrt{-1}}{4}}\in\mathbb{C}$ be a $4$-th primitive root of unity. Then $\omega^2+1=0$.
Taking $x=\omega$, it then follows from \eqref{equ-k1k2} that
$$
\omega^m=(1+\omega+\omega^2)^m=s_0+s_1\omega+s_2\omega^2+s_3\omega^3=s_0-s_2+(s_1-s_3)\omega,
$$
where $$
s_i=\sum_{
\substack{
0\le t\le 2m\\
t\equiv i\pmod 4
}}
\sum_{\substack{k_1, \ k_2 \ge 0 \\ k_1+2k_2=t}}{m\choose k_1}{m-k_1\choose k_2} 
$$
for $i\in\{0,1,2,3\}$.
 Then we have
 $$s_0=s_2 \ \ \text{ and } \ \ s_1-s_3=\begin{cases} 1 & \text{ if } m \equiv 1 \pmod 4,\\
                                              -1 & \text{ if } m \equiv 3 \pmod 4.\end{cases}$$
 It is easily deduced from the definitions of $T_i$ that
 $$|T_0|=s_0-1=s_2-1=|T_2|, \  |T_1|=s_1, \ |T_3|=s_3.$$
 Then by \eqref{equ-sumTi} we have
\begin{equation}\label{equ-Ti}\begin{cases}|T_0|=|T_1|-1=|T_2|=|T_3|=(3^m-3)/4 & \text{ if } m \equiv 1 \pmod 4,\\
 |T_0|=|T_1|=|T_2|=|T_3|-1=(3^m-3)/4 & \text{ if } m \equiv 3 \pmod 4.\end{cases}\end{equation}

When $m \equiv 3 \pmod 4\ge3$ is odd, the dimensions of $\C_{(i_1,i_2,m)}$ are treated in the following theorem.

\begin{theorem}\label{thm-dim43}
Let $m \equiv 3 \pmod 4\ge3$. Then we have the following.
\begin{itemize}
  \item The ternary cyclic code $\C_{(0,3,m)}$ has length $n=3^m-1$ and
dimension $k = n/2$, and the ternary cyclic code $\C_{(1,2,m)}$ has length $n=3^m-1$ and dimension $k = (n+2)/2$.
  \item The ternary cyclic code $\C_{(2,3,m)}$ has length $n=3^m-1$ and dimension $k = n/2$,
and the ternary cyclic code $\C_{(0,1,m)}$ has length $n=3^m-1$ and dimension $k = (n+2)/2$.
\end{itemize}
\end{theorem}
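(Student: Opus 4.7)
The proof is essentially a bookkeeping consequence of the cardinality formulas already established in equation (3), so my plan is to reduce the dimension computation to those formulas and then carry out the additions cleanly.

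First I would record the basic principle. The generator polynomial $g_{(i_1,i_2,m)}(x)$ has simple roots at the elements $\alpha^j$ with $j \in T_{i_1} \cup T_{i_2}$, and the sets $T_0,T_1,T_2,T_3$ are pairwise disjoint because $w_3(j) \bmod 4$ is a well-defined function of $j$. Hence $T_{(i_1,i_2,m)} = T_{i_1} \sqcup T_{i_2}$ and $\deg g_{(i_1,i_2,m)} = |T_{i_1}|+|T_{i_2}|$, so the dimension of $\C_{(i_1,i_2,m)}$ is
\[
k = n - |T_{i_1}| - |T_{i_2}|.
\]
This observation reduces the whole theorem to a direct arithmetic check using the cardinalities already computed.

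Next I would plug in the values from equation (3). For $m \equiv 3 \pmod 4$ we have $|T_0|=|T_1|=|T_2|=(3^m-3)/4$ and $|T_3|=(3^m-3)/4+1$. Then for each of the four relevant pairs $(i_1,i_2)$:
\begin{itemize}
\item $(0,3)$: $|T_0|+|T_3| = 2\cdot(3^m-3)/4 + 1 = (3^m-1)/2 = n/2$, giving $k = n - n/2 = n/2$.
\item $(1,2)$: $|T_1|+|T_2| = (3^m-3)/2 = (n-2)/2$, giving $k = n - (n-2)/2 = (n+2)/2$.
\item $(2,3)$: $|T_2|+|T_3| = (3^m-1)/2 = n/2$, giving $k = n/2$.
\item $(0,1)$: $|T_0|+|T_1| = (3^m-3)/2 = (n-2)/2$, giving $k = (n+2)/2$.
\end{itemize}
These four identities are exactly the four claims of the theorem.

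There is essentially no obstacle beyond verifying the arithmetic, because the substantive work—identifying $|T_i|$ via the generating function $(1+x+x^2)^m$ evaluated at the fourth root of unity $\omega$—was done in the paragraph preceding the theorem. The only point to be slightly careful about is remembering which of $|T_1|, |T_3|$ is the larger one in the residue class $m \equiv 3 \pmod 4$; this follows from $s_1 - s_3 = -1$ in that case, which forces $|T_3| = |T_1|+1$ and makes the pair $(2,3)$ (respectively $(0,3)$) produce $n/2$ rather than $(n-2)/2$.
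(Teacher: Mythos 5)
Your proof is correct and follows exactly the paper's argument: express the dimension as $n-|T_{i_1}|-|T_{i_2}|$ using the disjointness of the $T_i$, then substitute the cardinalities $|T_0|=|T_1|=|T_2|=|T_3|-1=(3^m-3)/4$ established for $m\equiv 3\pmod 4$. The only difference is that you write out the four arithmetic checks explicitly, which the paper leaves to the reader.
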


\begin{proof}
Note that $T_{(i_1,i_2,m)} =T_{i_1} \cup T_{i_2}$ is a disjoint union and
$$\dim(\C_{(i_1,i_2,m)})=n-|T_{(i_1,i_2,m)}|=n-|T_{i_1}|- |T_{i_2}|.$$
The desired conclusions then follow from \eqref{equ-Ti} directly.
\end{proof}

When $m \equiv 1 \pmod 4\ge5$ is odd, the dimensions of $\C_{(i_1,i_2,m)}$ are given in the following theorem 
whose proof is similar to that of Theorem \ref{thm-dim43} and is omitted here. 

\begin{theorem}\label{thm-dim41}
Let $m \equiv 1 \pmod 4\ge5$. Then we have the following. 
\begin{itemize}
  \item The ternary cyclic code $\C_{(0,3,m)}$ has length $n=3^m-1$ and
dimension $k = (n+2)/2$,
and the ternary cyclic code $\C_{(1,2,m)}$ has length $n=3^m-1$ and dimension $k = n/2$.
  \item The ternary cyclic code $\C_{(2,3,m)}$ has length $n=3^m-1$ and dimension $k = (n+2)/2$,
and the ternary cyclic code $\C_{(0,1,m)}$ has length $n=3^m-1$ and dimension $k = n/2$.
\end{itemize}
\end{theorem}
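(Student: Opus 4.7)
The proof plan is essentially identical in structure to the proof of Theorem \ref{thm-dim43}; only the roles of the indices shift because the cardinality that is larger by one changes from $|T_3|$ to $|T_1|$ when $m \equiv 1 \pmod 4$.

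First I would invoke the observation (already used in Theorem \ref{thm-dim43}) that $T_{(i_1,i_2,m)} = T_{i_1} \cup T_{i_2}$ is a disjoint union, because residues mod $4$ are distinct for distinct $i_1,i_2 \in \{0,1,2,3\}$. Hence
\[
\dim(\C_{(i_1,i_2,m)}) \;=\; n - |T_{(i_1,i_2,m)}| \;=\; n - |T_{i_1}| - |T_{i_2}|.
\]

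Next I would plug in the cardinalities from the top case of \eqref{equ-Ti}, namely
\[
|T_0| = |T_2| = |T_3| = \frac{3^m-3}{4}, \qquad |T_1| = \frac{3^m+1}{4},
\]
and carry out four routine subtractions. Pairs that avoid the ``heavy'' set $T_1$ yield
\[
n - |T_0| - |T_3| \;=\; n - |T_2| - |T_3| \;=\; (3^m-1) - \tfrac{3^m-3}{2} \;=\; \tfrac{3^m+1}{2} \;=\; \tfrac{n+2}{2},
\]
giving the dimensions of $\C_{(0,3,m)}$ and $\C_{(2,3,m)}$. The pairs that do include $T_1$ yield
\[
n - |T_1| - |T_2| \;=\; n - |T_0| - |T_1| \;=\; (3^m-1) - \tfrac{3^m-1}{2} \;=\; \tfrac{n}{2},
\]
giving the dimensions of $\C_{(1,2,m)}$ and $\C_{(0,1,m)}$.

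There is no real obstacle here: the main content of the result was already absorbed in the derivation of \eqref{equ-Ti}, where the $4$-th root of unity filter $\omega = e^{2\pi\sqrt{-1}/4}$ was applied to $(1+x+x^2)^m$ and the sign of $s_1 - s_3$ was shown to depend on $m \bmod 4$. Once \eqref{equ-Ti} is in hand, the theorem is a matter of bookkeeping, and the only point worth flagging explicitly is that switching from $m \equiv 3 \pmod 4$ to $m \equiv 1 \pmod 4$ swaps which of $T_1, T_3$ is the larger set, which in turn swaps which pairs $(i_1,i_2)$ produce dimension $n/2$ versus $(n+2)/2$.
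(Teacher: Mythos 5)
Your proposal is correct and matches the paper's own (omitted) argument exactly: the paper proves Theorem \ref{thm-dim43} by writing $\dim(\C_{(i_1,i_2,m)}) = n - |T_{i_1}| - |T_{i_2}|$ for the disjoint union $T_{(i_1,i_2,m)} = T_{i_1} \cup T_{i_2}$ and reading off the cardinalities from \eqref{equ-Ti}, and states that the $m \equiv 1 \pmod 4$ case is proved the same way, which is precisely your bookkeeping with $|T_1| = (3^m+1)/4$ as the heavy set. Your arithmetic for all four codes checks out.
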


\section{The ternary code $\C_{(0,3,m)}$ } \label{sec-four1}

In this section, we develop a lower bound on minimum distance of the ternary cyclic code $\C_{(0,3,m)}$.
When $m=3$, the parameters of the code $\C_{(0,3,3)}$ are documented in the following example
and are the parameters of the best known linear codes \cite{G}. This code is also an optimal
cyclic code \cite[Table A.92]{DingBK18}. This fact motivates us to investigate the parameters of the code $\C_{(0,3,m)}$ for odd $m$.

\begin{example}
Let $m=3$. Then $\C_{(0,3,m)}$ has parameters $[26, 13, 8]$ and generator polynomial 
$$
x^{13} + 2x^{11} + x^{10} + x^8 + x^6 + x^4 + 2x^3 + 1. 
$$
\end{example}

The following two lemmas on the defining set $T_{(0,3,m)}$ are necessary to derive the lower bound on minimum distance of $\C_{(0,3,m)}$.

\begin{lemma}\label{lem-031}
Let $m \equiv 1 \pmod{4} \ge 5$.  Let $v=(3^{(m+1)/2}-1)/2$ and $\delta=(3^{(m-1)/2}+5)/2$. Then $\gcd(v, n)=1$. Define
$$
T_{(0,3,m)}(v)=\{vi \bmod n: i \in T_{(0,3,m)} \}.
$$
Then
$$
\{1,2, \ldots, \delta-1\} \subset T_{(0,3,m)}(v).
$$
\end{lemma}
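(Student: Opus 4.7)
Set $t=(m+1)/2$ and $s=(m-1)/2$, so $t=s+1$, $m=2s+1$, and $2v=3^t-1$. For $\gcd(v,n)=1$ I will compute $\gcd(2v,n)=\gcd(3^t-1,3^{s+t}-1)=3^{\gcd(t,s)}-1=2$ (using $t=s+1$), and note that the hypothesis $m=4k+1$ forces $t=2k+1$ to be odd, so $3^t\equiv 3\pmod 8$ and $v=(3^t-1)/2\equiv 1\pmod 4$ is odd; therefore $\gcd(v,n)$ divides $2$ and is odd, hence equals $1$.

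For the inclusion I first identify $u:=v^{-1}\pmod n$. From $v(3^t+1)=(3^{m+1}-1)/2=3n/2+1\equiv n/2+1\pmod n$ together with $v\cdot n/2\equiv n/2\pmod n$ (valid because $v$ is odd), one gets $u\equiv n/2+3^t+1\pmod n$. For each $a\in\{1,\ldots,\delta-1\}$, I set $i_a:=ua\bmod n$ and split on the parity of $a$. When $a$ is even, $an/2\equiv 0\pmod n$, so $i_a=a(3^t+1)=a+a\cdot 3^t<n$ by a routine inequality for $a\le(3^s+3)/2$. Its base-$3$ expansion consists of two disjoint copies of the digits of $a$ placed in the blocks $\{0,\ldots,s-1\}$ and $\{t,\ldots,m-1\}$, giving $w_3(i_a)=2w_3(a)$; since $a$ is even, $w_3(a)\equiv a\equiv 0\pmod 2$, so $w_3(i_a)\equiv 0\pmod 4$ and $i_a\in T_0$. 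When $a$ is odd, $i_a\equiv n/2+a(3^t+1)\pmod n$. A monotonicity check isolates two sub-cases: for every odd $a\le(3^s-3)/2$ one already has $n/2+a(3^t+1)<n$ (case A, no reduction), whereas for the single remaining value $a=(3^s+1)/2$ (case B) exactly one reduction is needed and direct computation collapses to $i_a=2\cdot 3^s+1$, of $3$-weight $3$.

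The main obstacle is case A, which I will handle via the base-$3$ Kummer identity $w_3(i_a)=m+2w_3(a)-2c_a$, where $c_a$ counts the base-$3$ carries in the addition $(3^m-1)/2+a(3^t+1)$. The key structural fact is that position $s$ of $a(3^t+1)$ carries the digit $0$, so any carry entering position $s$ is absorbed ($1+0+1=2<3$) and cannot propagate to position $t=s+1$. This decouples the ``low block'' of positions $\{0,\ldots,s-1\}$ from the ``high block'' $\{t,\ldots,m-1\}$; each block adds the identical digit sequence $(a_0,\ldots,a_{s-1})$ to the all-ones pattern starting from carry $0$, so both produce the same carry profile and the same carry-out. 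Case A precludes overflow past position $m-1$, so the high block has carry-out $0$, and by the symmetry so does the low block; hence $c_a=2c_\ast$ is even. Combined with $w_3(a)\equiv a\equiv 1\pmod 2$ and $m\equiv 1\pmod 4$, this yields $w_3(i_a)\equiv 1+2-0\equiv 3\pmod 4$, so $i_a\in T_3$. In every case $i_a\in T_0\cup T_3=T_{(0,3,m)}$, proving $a\in T_{(0,3,m)}(v)$.
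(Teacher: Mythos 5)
Your proof is correct, and its skeleton matches the paper's: the same computation of $\gcd(2v,n)=2$, the same inverse $v^{-1}\equiv (3^m-1)/2+3^{(m+1)/2}+1 \pmod n$, the same split on the parity of $a$, and the same isolation of the single boundary value $a=(3^{(m-1)/2}+1)/2$, which both you and the paper reduce to $2\cdot 3^{(m-1)/2}+1$ of weight $3$. Where you genuinely diverge is the generic odd case: the paper rewrites $n/2+a(3^{(m+1)/2}+1)$ as $3^{(m-1)/2}+3^{(m+1)/2}b+b$ with $b=\frac{3^{(m-1)/2}-1}{2}+a<3^{(m-1)/2}$, so the base-$3$ digits are visibly disjoint and $w_3=1+2w_3(b)\equiv 3\pmod 4$ follows from the parity of $b$; you instead keep the raw sum $(3^m-1)/2+a(3^{(m+1)/2}+1)$ and invoke the Kummer-type identity $w_3(x+y)=w_3(x)+w_3(y)-2c$, arguing that the zero digit of $a(3^{(m+1)/2}+1)$ at position $(m-1)/2$ blocks carry propagation, so the low and high blocks run identical carry processes and $c$ is even. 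Both arguments are sound and in fact equivalent (the paper's identity is exactly what absorbs your carries: $w_3(b)=s+w_3(a)-2c_\ast$ recovers your formula); the paper's decomposition gives the exact weight in closed form and transfers verbatim to the companion Lemmas 4.2, 5.1--5.2, while your carry-counting is more conceptual — it explains structurally why the parity works without having to guess the right algebraic rewriting, at the cost of a slightly more delicate bookkeeping of where carries are generated and absorbed (including the carry out of the top position, which your case-A bound correctly rules out).
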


\begin{proof}
Since $v=(3^{(m+1)/2}-1)/2$, we have
\begin{align*}
  \gcd(2v,n) & = \gcd(3^{(m+1)/2}-1,3^m-1) \\
   & = 3^{\gcd((m+1)/2,m)}-1=2.
\end{align*}
It then follows that $\gcd(v, n)=1$.

We now prove the second desired conclusion. It can be easily verified that
\begin{align*}
  v^{-1} \bmod{n}=(3^m-1)/2+3^{(m+1)/2}+1.
\end{align*}
Consider now any positive integer $a$ with $a \leq \delta -1 = (3^{(m-1)/2}+3)/2 < 3^{(m-1)/2}-1$.  

When $a$ is even, we have $av^{-1} \bmod{n} = 3^{(m+1)/2} a +a$. It then follows that
\begin{align*}
w_3(a v^{-1} \bmod{n})=2w_3(a).
\end{align*}
Notice that $w_3(a)\equiv a \equiv 0\pmod{2}$. Consequently,
$$
w_3(a v^{-1} \bmod{n})\equiv 0\pmod{4}.
$$

When $a$ is odd and $a \leq \delta -1$, we have $1 \leq a \leq \delta -2=(3^{(m-1)/2} +1)/2$. It is easily verified that
\begin{align}
a v^{-1} \bmod{n} &=
((3^m-1)/2+ (3^{(m+1)/2}+1)a )\bmod{n}\notag	\\
&=(3^{(m-1)/2}+3^{(m+1)/2}\left( \frac{3^{(m-1)/2}-1}2+a \right)+\frac{3^{(m-1)/2}-1}2+a)\bmod{n}.\label{EQ:031}
\end{align}
The rest of the proof can be divided into the following two cases by noting that $a$ is odd.
\begin{enumerate}
\item Let $1\leq a\leq \frac{3^{(m-1)/2}-3}2$ be odd. It follows from (\ref{EQ:031}) that
$$w_3( a v^{-1} \bmod{n} )=1+2w_3( (3^{(m-1)/2}-1)/2+a).$$	
Note that $(3^{(m-1)/2}-1)/2+a$ is odd. Then we have
$$w_3( (3^{(m-1)/2}-1)/2+a)\equiv 1\pmod{2}.$$ It then follows that $w_3( a v^{-1} \bmod{n} )\equiv 3\pmod{4}$.
\item Let $ a= \frac{3^{(m-1)/2}+1}2$. It follows from (\ref{EQ:031}) that
\begin{align*}
w_3( a v^{-1} \bmod{n} )& =w_3((3^{(m-1)/2}+(3^{(m+1)/2}+1) 3^{(m-1)/2})\bmod{n})\\
& =w_3(2\cdot3^{(m-1)/2}+1)\equiv 3\pmod{4}.
\end{align*}
\end{enumerate}

The desired second conclusion then follows.
\end{proof}

\begin{lemma}\label{lem-032}
Let $m \equiv 3 \pmod{4}>3$.  Let $v=(3^{(m+1)/2}+1)/2$ and $\delta=(3^{(m-1)/2}+7)/2$. Then $\gcd(v, n)=1$. Define
$$
T_{(0,3,m)}(v)=\{vi \bmod n: i \in T_{(0,3,m)} \}.
$$
Then
$$
\{1,2, \ldots, \delta-1\} \subset T_{(0,3,m)}(v).
$$
\end{lemma}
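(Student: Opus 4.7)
My plan is to mirror the strategy of Lemma \ref{lem-031}, with every decomposition adjusted to reflect $v=(3^{(m+1)/2}+1)/2$ and $m\equiv 3\pmod 4$. For $\gcd(v,n)=1$, two Euclidean reductions suffice: from $3^{(m+1)/2}\equiv -1\pmod{3^{(m+1)/2}+1}$ I get $3^m-1\equiv -3^{(m-1)/2}-1$, so
\[
\gcd(3^{(m+1)/2}+1,\,3^m-1)\;=\;\gcd(3^{(m+1)/2}+1,\,3^{(m-1)/2}+1)\;=\;\gcd(3^{(m-1)/2}+1,\,2)\;=\;2,
\]
which gives $\gcd(v,n)=1$. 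Since $(m+1)/2$ is even, $3^{(m+1)/2}\equiv 1\pmod 4$ and $v$ is odd, so a direct product modulo $n$ verifies $v^{-1}\bmod n = (3^m-1)/2 + 3^{(m+1)/2} - 1$.

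Now fix $a$ with $1\le a\le\delta-1$; for $m\ge 7$ this satisfies $a<3^{(m-1)/2}$. When $a$ is even, $a\cdot(3^m-1)/2\equiv 0\pmod n$ and
\[
av^{-1}\bmod n \;=\; a(3^{(m+1)/2}-1) \;=\; (a-1)\cdot 3^{(m+1)/2} + (3^{(m+1)/2}-a),
\]
with the two summands supported on disjoint ranges of $3$-adic positions. The standard identities $w_3(3^k-a)=2k+1-2\nu_3(a)-w_3(a)$ and $w_3(a-1)=w_3(a)+2\nu_3(a)-1$ (valid for $1\le a<3^k$) taken with $k=(m+1)/2$ make the total weight collapse to $m+1\equiv 0\pmod 4$, so $av^{-1}\bmod n\in T_0$. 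When $a$ is odd, I rewrite
\[
(3^m-1)/2 + a(3^{(m+1)/2}-1) \;=\; 3^{(m+1)/2}\beta_1 + \beta_2,\qquad \beta_1=\tfrac{3^{(m-1)/2}-1}{2}+a,\;\; \beta_2=\tfrac{3^{(m+1)/2}-1}{2}-a.
\]
For odd $a\le (3^{(m-1)/2}-1)/2$ one has $\beta_1\le 3^{(m-1)/2}-1$, so $3^{(m+1)/2}\beta_1+\beta_2$ is already below $n$ and equals $av^{-1}\bmod n$. The identity $\beta_2 = 3^{(m-1)/2} + (3^{(m-1)/2}-1-\beta_1)$ exhibits the lower-half digits of $\beta_2$ as the $(2-\cdot)$-complements of those of $\beta_1$ together with a single $1$ at position $(m-1)/2$; hence $w_3(av^{-1}\bmod n)=w_3(\beta_1)+w_3(\beta_2)=m\equiv 3\pmod 4$. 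The one remaining odd $a$ is $(3^{(m-1)/2}+3)/2$, where $\beta_1=3^{(m-1)/2}+1$ causes a single overflow ($3^m\equiv 1\pmod n$) and a direct check yields $av^{-1}\bmod n = 3^{(m+1)/2}+3^{(m-1)/2}-1$, again of weight $m$.

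The main technical hurdle is the complementary-digit identity in the principal odd sub-case: without the explicit rewriting of $\beta_2$ in terms of $\beta_1$, one would have to track carries in the base-$3$ addition $\beta_1+\beta_2 = 2\cdot 3^{(m-1)/2}-1$; with it, both digit sums are read off simultaneously and the weight $m$ drops out. Everything else reduces to standard Euclidean or base-$3$ manipulations, after which $av^{-1}\bmod n\in T_0\cup T_3=T_{(0,3,m)}$ in every case, giving the claimed inclusion.
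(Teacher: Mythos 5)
Your proposal is correct and follows essentially the same route as the paper: the same inverse $v^{-1}\bmod n=(3^m-1)/2+3^{(m+1)/2}-1$, the same even/odd split in $a$, the same decomposition $3^{(m+1)/2}\beta_1+\beta_2$ with the complementary-digit identity for the main odd range, and the same single boundary case $a=(3^{(m-1)/2}+3)/2$ handled via $3^m\equiv 1\pmod n$. The only cosmetic differences are the explicit Euclidean steps for the gcd and the use of the two weight identities $w_3(a-1)=w_3(a)+2\nu_3(a)-1$ and $w_3(3^k-a)=2k+1-2\nu_3(a)-w_3(a)$ in the even case, which reproduce the paper's complement argument giving weight $m+1$.
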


\begin{proof}
%It is confirmed by Magma for $m \in \{7,11,15\}$. The proof will be worked out.
Since $v=(3^{(m+1)/2}+1)/2$ and $m/\gcd((m+1)/2,m)=1$ is odd, we have
$$\gcd(2v,n) = \gcd(3^{(m+1)/2}+1,3^m-1) =2.$$
It then follows that $\gcd(v, n)=1$.

We now prove the desired second conclusion. It can be easily verified that
\begin{align*}
  v^{-1} \bmod{n}=(3^m-1)/2+3^{(m+1)/2}-1.
\end{align*}
Consider now any positive $a$ with $a \leq \delta -1 = (3^{(m-1)/2}+5)/2 < 3^{(m-1)/2}-1$.  

When $a$ is even, we have
\begin{align*}
av^{-1} \bmod{n} &=3^{(m+1)/2} a -a \\
&=3^{(m+1)/2}(a-1)+3^{(m+1)/2}-a.
\end{align*}
It then follows that
\begin{align*}
w_3(a v^{-1} \bmod{n})
&= w_3(a-1)+w_3(3^{(m+1)/2}-1-(a-1))\\
&=w_3(3^{(m+1)/2}-1)\\
&=m+1.
\end{align*}
Consequently,
$$
w_3(a v^{-1} \bmod{n})\equiv  0 \pmod{4}.
$$

When $a$ is odd and $a \leq \delta -1$, we have $1 \leq a \leq \delta -2=(3^{(m-1)/2} +3)/2$. %, and the $3$-adic expansion of $a$ has the form $a=\sum_{i=0}^{(m-3)/2}a_i3^i$.
It is easily verified that
\begin{align}
a v^{-1} \bmod{n} &=
((3^m-1)/2+ (3^{(m+1)/2}-1)a ) \bmod{n}\notag	\\
&=(3^{(m-1)/2}+3^{(m+1)/2}\left( \frac{3^{(m-1)/2}-1}2+a \right)+\frac{3^{(m-1)/2}-1}2-a) \bmod{n}.\label{EQ:032}
\end{align}
The rest of the proof can be divided into the following two cases.
\begin{enumerate}
\item Let $1\leq a\leq \frac{3^{(m-1)/2}-1}2$ be odd. It follows from (\ref{EQ:032}) that
\begin{align*}
a v^{-1} \bmod{n} &=3^{(m-1)/2}+3^{(m+1)/2}\left( \frac{3^{(m-1)/2}-1}2+a \right)+3^{(m-1)/2}-1-\left(\frac{3^{(m-1)/2}-1}2+a\right).
\end{align*}
%\\&==3^{(m-1)/2}+3^{(m+1)/2}\left( \frac{3^{(m-1)/2}-1}2+a \right)+\sum_{i=0}^{(m-3)/2}-\left(\frac{3^{(m-1)/2}-1}2+a\right).
Note that $(3^{(m-1)/2}-1)/2+a\le3^{(m-1)/2}-1$, then we have
\begin{align*}
w_3( a v^{-1} \bmod{n} )&=1+w_3( (3^{(m-1)/2}-1)/2+a)+w_3(3^{(m-1)/2}-1-((3^{(m-1)/2}-1)/2+a))\\
&=1+w_3(3^{(m-1)/2}-1)\\
%&=1+2\cdot(\frac{m-3}2+1)\\
&=m\equiv 3\pmod{4}.
\end{align*}	
\item Let $ a= \frac{3^{(m-1)/2}+3}2$. It follows from (\ref{EQ:032}) that
\begin{align*}
w_3( a v^{-1} \bmod{n} )&=w_3((3^{(m+1)/2}(3^{(m-1)/2}+1)+3^{(m-1)/2}-2) \bmod{n})\\
&=w_3(3^{(m+1)/2}+3^{(m-1)/2}-1)\\
&=m\equiv 3\pmod{4}.
\end{align*}	
\end{enumerate}
The desired second conclusion then follows.
\end{proof}

The following theorem gives two lower bounds on the minimum distance of $\C_{(0,3,m)}$, which are obtained from Lemmas \ref{lem-031} and \ref{lem-032} directly
by employing the BCH bound on cyclic codes. 
The dimension of the code follows from Theorems \ref{thm-dim43} and  \ref{thm-dim41}.

\begin{theorem}\label{thm03}
Let $m \geq 5$ be odd.
Then the ternary cyclic code $\C_{(0,3,m)}$ has  parameters $[n, k, d]$ with
\begin{eqnarray*}
d  \geq \left\{
\begin{array}{ll}
\frac{3^{(m-1)/2}+7}{2}  & {\rm  if} \  m \equiv 3 \pmod{4}, \\
\frac{3^{(m-1)/2}+5}{2} & {\rm  if} \ m \equiv 1 \pmod{4}
\end{array}
\right.
\end{eqnarray*}
and
\begin{eqnarray*}
k=\left\{
\begin{array}{ll}
\frac{n}{2} & {\rm if~} m \equiv 3 \pmod{4}, \\
\frac{n+2}{2} & {\rm if~} m \equiv 1 \pmod{4}.
\end{array}
\right.
\end{eqnarray*}
\end{theorem}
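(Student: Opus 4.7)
The plan is short: combine the two preceding lemmas with the BCH bound, together with the standard fact that multiplication by a unit modulo $n$ produces an equivalent cyclic code. The first step I would record explicitly is this equivalence. Whenever $\gcd(v,n)=1$, the map $\alpha^{i}\mapsto\alpha^{vi}$ induces a coordinate permutation that carries $\C_{(0,3,m)}$ to the cyclic code whose defining set (with respect to $\alpha$) is exactly $T_{(0,3,m)}(v)=\{vj\bmod n:j\in T_{(0,3,m)}\}$, and this permutation preserves length, dimension and minimum distance. So it suffices to bound the minimum distance of this transformed code.

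Next I would split into the two parity cases. For $m\equiv 1\pmod 4$, I take $v=(3^{(m+1)/2}-1)/2$ and invoke Lemma \ref{lem-031}: it gives $\gcd(v,n)=1$ and $\{1,2,\ldots,\delta-1\}\subset T_{(0,3,m)}(v)$ with $\delta=(3^{(m-1)/2}+5)/2$. Hence the transformed code has $\delta-1$ consecutive integers in its defining set and the BCH bound delivers $d\geq\delta$. For $m\equiv 3\pmod 4$, the same argument goes through verbatim with $v=(3^{(m+1)/2}+1)/2$ and $\delta=(3^{(m-1)/2}+7)/2$ via Lemma \ref{lem-032}. This yields the two cases of the asserted lower bound on $d$ in one stroke.

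Finally, the dimension formula is not something I would reprove; it is an immediate reading of Theorems \ref{thm-dim43} and \ref{thm-dim41}, giving $k=n/2$ when $m\equiv 3\pmod 4$ and $k=(n+2)/2$ when $m\equiv 1\pmod 4$. In short, the only substantive ingredients are Lemmas \ref{lem-031} and \ref{lem-032}, whose proofs are the place where all the real work on $3$-adic digits has been concentrated; once those are granted, there is no remaining obstacle, since the multiplier-equivalence plus BCH argument is routine. The one point to state carefully in the write-up is that the BCH bound applies to the multiplier-transformed code, not directly to $\C_{(0,3,m)}$, since the set $T_{(0,3,m)}$ itself need not contain a long run of consecutive integers.
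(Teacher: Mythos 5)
Your proposal is correct and follows exactly the paper's own route: Lemmas \ref{lem-031} and \ref{lem-032} combined with the BCH bound applied to the multiplier-equivalent code with defining set $T_{(0,3,m)}(v)$, and the dimension read off from Theorems \ref{thm-dim43} and \ref{thm-dim41}. Your explicit remark that the BCH bound is applied to the transformed code (justified by $\gcd(v,n)=1$) is precisely the step the paper leaves implicit, so there is nothing to correct.
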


Let $\delta_{\max}(v)$ denote the largest value $\ell$ such that
$$
\{a, (a +1) \bmod n, \ldots, (a + \ell -1) \bmod n\}  \subset T_{(0,3,m)}(v)
$$
for some integer $a$ with $0 \leq a \leq n-1$. Define
$$
\delta_{\max}=\max\{ \delta_{\max}(v): 1 \leq v \leq n-1, \ \gcd(v,n)=1\}.
$$
Then we have the following results on $m$ and corresponding $\delta_{\max}$ by Magma:
$$(m, \delta_{\max})=(3, 5), \ (5, 11), \ (7, 19), \ (9, 43).$$
When $m=9$, we have
$$
(3^{(m-1)/2}+5)/2  = 43.
$$
Hence, lower bounds in Theorem \ref{thm03}  cannot be improved with the BCH bound on cyclic codes in general.

\section{The ternary cyclic codes $\C_{(1,2,m)}$ } \label{sec-four2}

The parameters of the code $\C_{(1,2,3)}$ are documented in the following example
and are the parameters of the best linear codes known \cite{G}. This code is an optimal
cyclic code \cite[Table A.92]{DingBK18}.

\begin{example}
Let $m=3$. Then $\C_{(1,2,m)}$ has parameters $[26, 14, 7]$ and generator polynomial 
$$
x^{12} + x^{11} + 2x^{10} + x^9 + 2x^8 + 2x^7 + x^6 + x^5 + x^4 + 
    2x^3 + x^2 + x + 1. 
$$
\end{example}

The following two lemmas on the defining set $T_{(1,2,m)}$ are necessary to derive the lower bound on the minimum distance of $\C_{(1,2,m)}$.

\begin{lemma}\label{lem-121}
Let $m \equiv 1 \pmod{4}>1$.  Let $v=(3^{(m-1)/2}+1)/2$ and $\delta=(3^{(m-1)/2}+13)/2$. Then $\gcd(v, n)=1$. Define
$$
T_{(1,2,m)}(v)=\{vi \bmod n: i \in T_{(1,2,m)} \}.
$$
Then
$$
\{1, 2, \ldots, \delta-1\} \subset T_{(1,2,m)}(v).
$$
\end{lemma}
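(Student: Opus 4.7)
The plan follows the template of Lemmas~\ref{lem-031}--\ref{lem-032}. First, $\gcd(v,n)=1$ follows from $2v = 3^{(m-1)/2}+1 \mid 3^{m-1}-1$ combined with $\gcd(3^{m-1}-1,3^m-1) = 3^{\gcd(m-1,m)}-1 = 2$, so $\gcd(2v,n)=2$; combined with the observation that $3^{(m-1)/2} \equiv 1 \pmod 8$ (since $(m-1)/2$ is even) making $v \equiv 1 \pmod 4$ odd, this forces $\gcd(v,n)=1$. A direct multiplication then shows
\[
v^{-1} \bmod n \;=\; \frac{3^m-1}{2} - 3^{(m+1)/2} + 3,
\]
which can be verified by expanding $(3^{(m-1)/2}+1)\cdot v^{-1}$ and reducing modulo $n$.

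To establish $\{1,\ldots,\delta-1\}\subset T_{(1,2,m)}(v)$, one must verify $w_{3}(a v^{-1} \bmod n) \equiv 1$ or $2 \pmod 4$ for every $a$ in the range. The argument splits by the parity of $a$. For even $a$, the term $a(3^m-1)/2$ vanishes modulo $n$, and since $(3^{(m+1)/2}-3)a < n$ throughout the stated range, one obtains
\[
a v^{-1} \bmod n \;=\; 3^{(m+1)/2}\bigl(3^{(m-1)/2}-a\bigr) + 3(a-1) + 2.
\]
In the generic range $2 \le a \le 3^{(m-1)/2}-1$, the three summands occupy the disjoint digit windows $\{(m+1)/2,\ldots,m-1\}$, $\{1,\ldots,(m-1)/2\}$, and $\{0\}$, so $w_{3} = w_{3}(3^{(m-1)/2}-a)+w_{3}(a-1)+2 = (m-1)+2 = m+1 \equiv 2 \pmod 4$. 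The sole even boundary value $a = 3^{(m-1)/2}+1$ (in range only when $m=5$) collapses to $a v^{-1} \bmod n = 2$ via the identity $(3^{(m+1)/2}-3)(3^{(m-1)/2}+1) = 3^m-3$.

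For odd $a$ with $1 \le a \le (3^{(m-1)/2}-3)/2$, the decomposition $(3^m-1)/2 = 3^{(m+1)/2}R + (3R+1)$ with $R = (3^{(m-1)/2}-1)/2$ produces
\[
a v^{-1} \bmod n \;=\; 3^{(m+1)/2}(R-a) + 3(R+a) + 1,
\]
again a sum with disjoint digit support, giving $w_{3} = 1 + w_{3}(R+a) + w_{3}(R-a) = 1 + (m-1) = m \equiv 1 \pmod 4$. The three odd boundary values I would treat directly: $a = (3^{(m-1)/2}+1)/2$ gives $a v^{-1} \bmod n = 1$ (again via the displayed factorization), and $a \in \{(3^{(m-1)/2}+5)/2, (3^{(m-1)/2}+9)/2\}$ simplify after one wrap modulo $n$ to $3^{(m+1)/2}(3^{(m-1)/2}-2)+6$ and $3^{(m+1)/2}(3^{(m-1)/2}-4)+12$ respectively, each of $3$-weight $m$ from the base-$3$ expansions $3^{(m-1)/2}-2 = \underbrace{2\cdots 2}_{(m-3)/2}\,1$ and $3^{(m-1)/2}-4 = \underbrace{2\cdots 2}_{(m-5)/2}\,1\,2$, both of weight $m-2$.

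The main obstacle is the digit-sum identity used above: whenever $x+y = 3^s-1$ with $0 \le x,y < 3^s$, one has $w_{3}(x)+w_{3}(y) = 2s$. A clean proof observes that $x_i+y_i \le 4 < 5$, so a carry $c_{i+1}=1$ would force $c_i = 1$ (the only way to reach $x_i+y_i+c_i \ge 5$); carries thus propagate only backwards, contradicting $c_0 = 0$, hence the column-by-column addition is carry-free and $x_i+y_i = 2$ at every position. The two instances needed are $(x,y) = (3^{(m-1)/2}-a, a-1)$ for the even case and $(R-a, R+a)$ for the odd case. The remaining bookkeeping---verifying disjointness of digit windows in each decomposition---reduces to the elementary inequalities $R+a < 3^{(m-1)/2}$ and $3(a-1)+2 < 3^{(m+1)/2}$, both automatic in the stated range.
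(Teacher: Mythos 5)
Your proof is correct and follows essentially the same route as the paper's: split on the parity of $a$, reduce the generic range to the carry-free complement identity $w_3(x)+w_3(3^s-1-x)=2s$, and finish the few boundary values of $a$ by direct computation. The differences are cosmetic---you work with $av^{-1}\bmod n$ itself where the paper shifts to $3^{m-1}av^{-1}\bmod n$, and your decomposition $(3^m-1)/2=3^{(m+1)/2}R+3R+1$ with $R=(3^{(m-1)/2}-1)/2$ is just the cyclic shift of theirs.

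One substantive point in your favour: the paper asserts $\delta-1=(3^{(m-1)/2}+11)/2<3^{(m-1)/2}-1$, which fails for $m=5$ (there $\delta-1=10>8$), and its even-case digit split $w_3(a-1)+w_3\bigl(3^{(m+1)/2}-1-(a-1)\bigr)$ breaks at $a=10$ because $a-1=3^{(m-1)/2}$ no longer fits in the low digit window (one actually gets $w_3=2$ rather than $m+1=6$; the two agree modulo $4$ only by luck). Your explicit treatment of the even boundary value $a=3^{(m-1)/2}+1$, which collapses to $av^{-1}\equiv 2\pmod n$ via $(3^{(m+1)/2}-3)(3^{(m-1)/2}+1)=3^m-3$, closes exactly this gap, so your write-up is if anything more careful than the paper's on the $m=5$ case.
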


\begin{proof}
%It is confirmed by Magma for $m \in \{5,9,13\}$. The proof will be worked out.
Since $v=(3^{(m-1)/2}+1)/2$ and $m/\gcd((m-1)/2,m)=1$ is odd, we have
$$\gcd(2v,n) = \gcd(3^{(m-1)/2}+1,3^m-1) =2.$$
It then follows that $\gcd(v, n)=1$.

We now prove the desired second conclusion. It can be easily verified that
\begin{align*}
  v^{-1} \bmod{n}=(3^m-1)/2-3^{(m+1)/2}+3.
\end{align*}
Consider now any positive $a$ with $a \leq \delta -1 = (3^{(m-1)/2}+11)/2 < 3^{(m-1)/2}-1$.  

When $a$ is even, we have
\begin{align*}
av^{-1} \bmod{n} =n-3^{(m+1)/2} a +3a,
\end{align*}
and
\begin{align*}
3^{m-1}av^{-1} \bmod{n} &=n-3^{(m-1)/2} a +a\\
&=3^{(m-1)/2}(3^{(m+1)/2}-a )+a-1.
\end{align*}
It then follows that
\begin{align*}
w_3(a v^{-1} \bmod{n})&=w_3(3^{m-1} a v^{-1} \bmod{n})\\
&=w_3(a-1)+w_3(3^{(m+1)/2}-1-(a-1) )\\
&=w_3(3^{(m+1)/2}-1)\\
&=m+1.
\end{align*}
Consequently,
$$
w_3(a v^{-1} \bmod{n})\equiv 2 \pmod{4}.
$$

When $a$ is odd and $a \leq \delta -1$, we have $1 \leq a \leq \delta -2=(3^{(m-1)/2} +9)/2$. It is easily verified that
$$a v^{-1} \bmod{n} =((3^m-1)/2- (3^{(m-1)/2}-1)3a)\bmod{n},$$
 and
\begin{align}
3^{m-1} a v^{-1} \bmod{n}&= (3^m-1)/2- 3^{(m-1)/2}a+a \notag	\\
&=3^{m-1}+3^{(m-1)/2}\left( \frac{3^{(m-1)/2}-1}2-a \right)+\frac{3^{(m-1)/2}-1}2+a.\label{EQ:121}
\end{align}
Clearly, $w_3( a v^{-1} \bmod{n} )=w_3(3^{m-1}a v^{-1} \bmod{n})$. The rest of the proof can be divided into the following four cases.
\begin{enumerate}
\item Let $1\leq a\leq \frac{3^{(m-1)/2}-1}2$ be odd. It follows from (\ref{EQ:121}) that 
\begin{align*}
& 3^{m-1} a v^{-1} \bmod{n}  \\ &=3^{m-1}+3^{(m-1)/2}\left(\frac{3^{(m-1)/2}-1}2-a\right)+ 3^{(m-1)/2}-1- \left(\frac{3^{(m-1)/2}-1}2-a\right).
\end{align*}
%Note that $(3^{(m-1)/2}-1)/2+a\le3^{(m-1)/2}-1$. 
Then we have
\begin{align*}
&w_3( a v^{-1} \bmod{n} )=w_3( 3^{m-1}a v^{-1} \bmod{n} )\\
=&~1+w_3((3^{(m-1)/2}-1)/2-a)+w_3( 3^{(m-1)/2}-1-( (3^{(m-1)/2}-1)/2-a ) )\\
=&~1+w_3(3^{(m-1)/2}-1)=m\equiv 1\pmod{4}.
\end{align*}	
\item Let $ a= \frac{3^{(m-1)/2}+1}2$. It follows from (\ref{EQ:121}) that
$$w_3( a v^{-1} \bmod{n} )=w_3(3^{m-1})\equiv 1\pmod{4}.$$	
\item Let $ a= \frac{3^{(m-1)/2}+5}2$. It follows from (\ref{EQ:121}) that
$$w_3( a v^{-1} \bmod{n} )=w_3(3^{m-1}-3^{(m+1)/2}+3^{(m-1)/2}+2)=m\equiv 1\pmod{4}.$$	
\item Let $ a= \frac{3^{(m-1)/2}+9}2$. It follows from (\ref{EQ:121}) that
$$w_3( a v^{-1} \bmod{n} )=w_3(3^{m-1}-3^{(m+1)/2}-3^{(m-1)/2}+3+1)=m\equiv 1\pmod{4}.$$	
\end{enumerate}
The desired second conclusion then follows.
\end{proof}

\begin{lemma}\label{lem-122}
Let $m \equiv 3 \pmod{4} > 3$.  Let $v=(3^{(m-1)/2}-1)/2$ and $\delta=(3^{(m-1)/2}+11)/2$. Then $\gcd(v, n)=1$. Define
$$
T_{(1,2,m)}(v)=\{vi \bmod n: i \in T_{(1,2,m)} \}.
$$
Then
$$
\{1,2, \ldots, \delta-1\} \subset T_{(1,2,m)}(v).
$$
\end{lemma}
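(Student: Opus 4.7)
The plan is to mimic the proofs of Lemmas \ref{lem-032} and \ref{lem-121} for this sibling case. Since $m$ is odd we have $\gcd(m,(m-1)/2)=1$, so $\gcd(2v,n)=\gcd(3^{(m-1)/2}-1,3^m-1)=3^1-1=2$, which forces $\gcd(v,n)=1$. A direct reduction modulo $n$ (using $3^m\equiv 1$ and the evenness of $n$) will verify that
\[
v^{-1}\bmod n=(3^m-1)/2-3^{(m+1)/2}-3.
\]

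For the main inclusion I will split $a\in\{1,\ldots,\delta-1\}$ into even and odd subcases. For $a$ even, using $a(3^m-1)/2\equiv 0\pmod n$ one gets $av^{-1}\equiv -a(3^{(m+1)/2}+3)\pmod n$, and since $a\le (3^{(m-1)/2}+9)/2<3^{(m-1)/2}$ the positive representative reorganizes as
\[
av^{-1}\bmod n=3^{(m+1)/2}(3^{(m-1)/2}-a-1)+(3^{(m+1)/2}-3a-1),
\]
whose two summands occupy disjoint 3-adic digit ranges. The complement identity $w_3(3^k-1-c)=2k-w_3(c)$, together with $w_3(3a)=w_3(a)$, then gives $w_3(av^{-1}\bmod n)=(m-1-w_3(a))+(m+1-w_3(a))=2m-2w_3(a)$; since $a$ even forces $w_3(a)$ even, this is $\equiv 2m\equiv 2\pmod 4$, landing inside $T_{(1,2,m)}$.

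For $a$ odd I will exploit the 3-weight-preserving shift $w_3(av^{-1}\bmod n)=w_3(3^{m-1}av^{-1}\bmod n)$ and compute
\[
3^{m-1}av^{-1}\bmod n=(3^m-1)/2-a(3^{(m-1)/2}+1).
\]
In the main sub-case $1\le a\le(3^{(m-1)/2}-1)/2$, the right-hand side decomposes as $3^{m-1}+3^{(m-1)/2}b+b$ with $b=(3^{(m-1)/2}-1)/2-a$. Since $m\equiv 3\pmod 4$ makes $(3^{(m-1)/2}-1)/2$ odd, the quantity $b$ is even, the three blocks of digits are disjoint, and $w_3=1+2w_3(b)\equiv 1\pmod 4$. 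The two boundary odd values $a\in\{(3^{(m-1)/2}+3)/2,(3^{(m-1)/2}+7)/2\}$, for which $b$ would be negative, have to be handled separately: direct substitution yields $3^{m-1}-2\cdot 3^{(m-1)/2}-2$ and $3^{m-1}-4\cdot 3^{(m-1)/2}-4$, respectively. Rewriting each as $(3^{m-1}-1)-(\text{small})$ and using the fact that $3^{m-1}-1$ has all digits equal to $2$ lets me read off the 3-adic digits and confirm $w_3=2m-5\equiv 1\pmod 4$ in both cases.

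The main obstacle is identifying the right $v^{-1}$: because $n$ is even, the equation derived from $2v\cdot 2v^{-1}\equiv 4\pmod n$ only determines $v^{-1}$ modulo $n/2$, so one of two candidates differing by $n/2$ must be singled out by a direct check. The secondary obstacle is the pair of boundary odd values, which fall outside the clean decomposition used for the bulk range and require individual 3-adic bookkeeping. Once these are settled, all of $\{1,2,\ldots,\delta-1\}$ lies in $T_{(1,2,m)}(v)$, completing the proof.
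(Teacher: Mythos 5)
Your proposal is correct and follows essentially the same route as the paper's proof: the same gcd computation, the same inverse $v^{-1}\bmod n=(3^m-1)/2-3^{(m+1)/2}-3$, the same even/odd split with the multiplication by $3^{m-1}$ and the decomposition $3^{m-1}+3^{(m-1)/2}b+b$ for the bulk odd range, and the same two boundary values $a=(3^{(m-1)/2}+3)/2$ and $(3^{(m-1)/2}+7)/2$ giving $w_3=2m-5\equiv 1\pmod 4$. The only cosmetic deviation is in the even case, where you expand $av^{-1}\bmod n$ directly into two digit-disjoint blocks instead of first multiplying by $3^{m-1}$ as the paper does, but both yield the identical count $2m-2w_3(a)\equiv 2\pmod 4$.
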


\begin{proof}
Since $v=(3^{(m-1)/2}-1)/2$, we have
\begin{align*}
  \gcd(2v,n) & = \gcd(3^{(m-1)/2}-1,3^m-1) \\
   & = 3^{\gcd((m-1)/2,m)}-1=2.
\end{align*}
It then follows that $\gcd(v, n)=1$.

We now prove the desired second conclusion. It can be easily verified that
\begin{align*}
  v^{-1} \bmod{n}=(3^m-1)/2-3^{(m+1)/2}-3.
\end{align*}
Consider now any positive $a$ with $a \leq \delta -1 = (3^{(m-1)/2}+9)/2 < 3^{(m-1)/2}-1$. 

When $a$ is even, we have
\begin{align*}
a  v^{-1} \bmod{n} =
n- 3^{(m+1)/2} a -3a, 
\end{align*}
and 
\begin{align*}
3^{m-1}a  v^{-1} \bmod{n} &=
n- 3^{(m-1)/2} a -a\\
&=3^{(m-1)/2}(3^{(m+1)/2}-1-a )+3^{(m-1)/2}-1-a.
\end{align*}
It then follows that
\begin{align*}
w_3(a  v^{-1} \bmod{n})
&= w_3(3^{m-1} a  v^{-1} \bmod{n} )\\
&=w_3(3^{(m+1)/2}-1-a )+w_3(3^{(m-1)/2}-1-a )\\
&=m+1-w_3(a)+m-1-w_3(a)\\
&=2m-2w_3(a).
\end{align*}
Notice that $w_3(a)\equiv a \equiv 0\pmod{2}$. Consequently,
$$
w_3(a  v^{-1} \bmod{n})\equiv  2m \equiv  2 \pmod{4}.
$$ 

When $a$ is odd and $a \leq \delta -1$, we have $1 \leq a \leq \delta -2=(3^{(m-1)/2} +7)/2$. It is easily verified that
$$a v^{-1} \bmod{n} =
(3^m-1)/2- (3^{(m-1)/2}+1)3a\bmod{n},$$
 and
\begin{align}
3^{m-1} a v^{-1} \bmod{n}&=(3^m-1)/2- (3^{(m-1)/2}+1)a \notag	\\
&=3^{m-1}+3^{(m-1)/2}\left( \frac{3^{(m-1)/2}-1}2-a \right)+\frac{3^{(m-1)/2}-1}2-a.\label{EQ:122}
\end{align}
Clearly, $w_3( a v^{-1} \bmod{n} )=w_3(3^{m-1}a v^{-1} \bmod{n})$. The rest of the proof can be divided into the following three cases.
\begin{enumerate}
\item Let $1\leq a\leq \frac{3^{(m-1)/2}-1}2$ be odd. It follows from (\ref{EQ:122}) that
$$w_3( a v^{-1} \bmod{n} )=1+2w_3( (3^{(m-1)/2}-1)/2-a).$$	
Note that $(3^{(m-1)/2}-1)/2-a$ is even, we have
$$w_3( (3^{(m-1)/2}-1)/2-a)\equiv 0\pmod{2}.$$ It then follows that $w_3( a v^{-1} \bmod{n} )\equiv 1\pmod{4}$.
\item Let $ a= \frac{3^{(m-1)/2}+3}2$. It follows from (\ref{EQ:122}) that
$$w_3( a v^{-1} \bmod{n} )=w_3(3^{m-1}-2\cdot3^{(m-1)/2}-2)=2m-5\equiv 1\pmod{4}.$$	
\item Let $ a= \frac{3^{(m-1)/2}+7}2$. It follows from (\ref{EQ:122}) that 
\begin{align*}
w_3( a v^{-1} \bmod{n} ) &=
w_3(3^{m-1}-3^{(m+1)/2}-3^{(m-1)/2}-3-1) \\
&= 2m-5\equiv 1\pmod{4}.
\end{align*} 
\end{enumerate}

The desired second conclusion then follows.
\end{proof}

The following theorem gives two lower bounds on the minimum distance of $\C_{(1,2,m)}$, which are obtained from Lemmas \ref{lem-121} and \ref{lem-122} directly
by employing the BCH bound on cyclic codes. 
The dimension of the code follows from Theorems \ref{thm-dim43} and  \ref{thm-dim41}.

\begin{theorem}\label{thm12}
Let $m \geq 5$ be odd.
Then the ternary cyclic code $\C_{(1,2,m)}$ has  parameters $[n, k, d]$ with
\begin{eqnarray*}
d \geq \left\{
\begin{array}{ll}
(3^{(m-1)/2}+13)/2 & {\rm if~} m \equiv 1 \pmod{4}, \\
(3^{(m-1)/2}+11)/2  & {\rm if ~} m \equiv 3 \pmod{4}
\end{array}
\right.
\end{eqnarray*}
and
\begin{eqnarray*}
k=\left\{
\begin{array}{ll}
\frac{n}{2} & {\rm if~} m \equiv 1 \pmod{4}, \\
\frac{n+2}{2} & {\rm if~} m \equiv 3 \pmod{4}.
\end{array}
\right.
\end{eqnarray*}
\end{theorem}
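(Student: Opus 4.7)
The plan is to deduce this theorem essentially as a direct corollary of Lemmas \ref{lem-121} and \ref{lem-122} together with the BCH bound, splitting into the two residue classes of $m$ modulo $4$.

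First I would recall the standard fact that if $v$ is an integer coprime to $n$ and $\C$ is a cyclic code of length $n$ over $\mathbb{F}_q$ with defining set $T$ (with respect to a chosen primitive $n$-th root of unity $\alpha$), then the set $vT = \{vi \bmod n : i \in T\}$ is the defining set of $\C$ with respect to the primitive $n$-th root $\alpha^{v^{-1}}$. Since the BCH bound depends only on the code, it can be applied to $vT$. Equivalently, if $vT$ contains a run of $\delta - 1$ consecutive integers modulo $n$, then $d(\C) \ge \delta$.

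Second, I would split into cases. For $m \equiv 1 \pmod 4$ with $m \ge 5$, Lemma \ref{lem-121} furnishes a multiplier $v = (3^{(m-1)/2}+1)/2$ with $\gcd(v,n)=1$ for which $T_{(1,2,m)}(v)$ contains $\{1,2,\dots,(3^{(m-1)/2}+11)/2\}$, i.e. a run of $\delta-1$ consecutive integers with $\delta = (3^{(m-1)/2}+13)/2$. The BCH bound then yields $d \ge (3^{(m-1)/2}+13)/2$. For $m \equiv 3 \pmod 4$ with $m > 3$, Lemma \ref{lem-122} provides $v = (3^{(m-1)/2}-1)/2$ and the run $\{1,2,\dots,(3^{(m-1)/2}+9)/2\}$, giving $d \ge (3^{(m-1)/2}+11)/2$.

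Finally, for the dimension I would invoke Theorem \ref{thm-dim41} in the case $m \equiv 1 \pmod 4$, which gives $\dim \C_{(1,2,m)} = n/2$, and Theorem \ref{thm-dim43} in the case $m \equiv 3 \pmod 4$, which gives $\dim \C_{(1,2,m)} = (n+2)/2$. Combining these with the distance bounds above produces exactly the claimed parameters in both cases. There is no real obstacle here, since the combinatorial work of exhibiting the consecutive run in $T_{(1,2,m)}(v)$ has already been done in Lemmas \ref{lem-121} and \ref{lem-122}; the only care needed is to verify that the case $m \equiv 3 \pmod 4$ hypothesis $m>3$ in Lemma \ref{lem-122} is compatible with the $m \ge 5$ hypothesis of the theorem, which it is.
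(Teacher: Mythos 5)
Your proposal is correct and matches the paper's argument: the paper also obtains the distance bounds by applying the BCH bound to the consecutive runs exhibited in Lemmas \ref{lem-121} and \ref{lem-122} (via the multiplier $v$ coprime to $n$), and takes the dimension from Theorems \ref{thm-dim43} and \ref{thm-dim41}. No gaps.
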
   	
	
%\begin{proof}
%The dimension of $\C_{(1,2,m)}$ is equal to $n-|T_{(1,2,m)}|$ and will be worked out later.
%Combining the BCH bound on cyclic codes and Lemmas \ref{lem-121} and \ref{lem-122}
%yields the desired lower bound on the minimum distance.
%\end{proof}	

\section{The ternary cyclic codes $\C_{(0,1,m)}$ } \label{sec-four3}

The parameters of the code $\C_{(0,1,3)}$ are documented in the following example
and are the parameters of the best linear codes known \cite{G}. This code is an optimal
cyclic code \cite[Table A.92]{DingBK18}.

\begin{example}
Let $m=3$. Then $\C_{(0,1,m)}$ has parameters $[26, 14, 7]$ and generator polynomial 
$$
x^{12} + x^{11} + x^{10} + 2x^9 + x^8 + x^7 + x^6 + 2x^5 + 2x^4 + 
    x^3 + 2x^2 + x + 1.
$$ 

\end{example}

The following two lemmas on the defining set $T_{(0,1,m)}$ are necessary to develop the lower bound on the minimum distance of $\C_{(0,1,m)}$.

\begin{lemma}\label{lem-011}
Let $m \equiv 1 \pmod{4}>1$.  Let $v=(3^{(m-1)/2}+1)/2$ and $\delta=(3^{(m-1)/2}+13)/2$. Then $\gcd(v, n)=1$. Define
$$
T_{(0,1,m)}(v)=\{vi \bmod n: i \in T_{(0,1,m)} \}.
$$
Then
$$
\{n-(\delta-1), \ldots, n-2, n-1\} \subset T_{(0,1,m)}(v).
$$
\end{lemma}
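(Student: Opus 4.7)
The plan is to reduce this lemma directly to Lemma \ref{lem-121}, since both statements use exactly the same $v = (3^{(m-1)/2}+1)/2$ and $\delta = (3^{(m-1)/2}+13)/2$. The coprimality $\gcd(v,n)=1$ and the formula for $v^{-1} \bmod n$ are already established there, so no new work is needed for those.

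The key observation I would use is the \emph{complement identity}: for every $x$ with $1 \le x \le n-1$,
\[
w_3(n-x) \;=\; 2m - w_3(x).
\]
This holds because the base-$3$ expansion of $n = 3^m - 1$ is $(22\cdots 2)_3$, so if $x = \sum_{j=0}^{m-1} x_j 3^j$ then $n - x = \sum_{j=0}^{m-1}(2 - x_j)3^j$ is already a valid base-$3$ expansion (no borrow), with digit sum $2m - w_3(x)$.

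Now I fix any $a \in \{1,2,\ldots,\delta-1\}$, set $b = n - a$, and aim to show $bv^{-1} \bmod n \in T_{(0,1,m)}$. Since $\gcd(v,n)=1$ and $a \ne 0$, the element $av^{-1} \bmod n$ lies in $\{1,\ldots,n-1\}$, so
\[
bv^{-1} \bmod n \;=\; (-a)v^{-1} \bmod n \;=\; n - (av^{-1} \bmod n).
\]
The complement identity then gives $w_3(bv^{-1} \bmod n) = 2m - w_3(av^{-1} \bmod n)$. Lemma \ref{lem-121} supplies $w_3(av^{-1} \bmod n) \equiv 1$ or $2 \pmod{4}$, and since $m \equiv 1 \pmod 4$ forces $2m \equiv 2 \pmod 4$, we get $w_3(bv^{-1} \bmod n) \equiv 1$ or $0 \pmod 4$, respectively. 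Therefore $bv^{-1} \bmod n \in T_0 \cup T_1 = T_{(0,1,m)}$, which is exactly $b \in T_{(0,1,m)}(v)$. Ranging over $a$ yields the desired inclusion $\{n-(\delta-1),\ldots,n-1\} \subset T_{(0,1,m)}(v)$.

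I do not anticipate any real obstacle: all the heavy case analysis on the parity of $a$ and on the subranges $a \le (3^{(m-1)/2}-1)/2$ versus the three boundary values has already been absorbed into Lemma \ref{lem-121}. The only subtlety is confirming that the complement map $j \mapsto n - j$ behaves cleanly on the $3$-adic weight, and this is handled once by the digit-sum identity above. If the authors prefer a self-contained argument that mirrors the style of Lemma \ref{lem-121}, one could instead compute $-av^{-1} \bmod n$ directly and repeat the four case-by-case weight computations; but the shortcut via the complement identity is considerably shorter and avoids duplicating work already done.
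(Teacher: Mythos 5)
Your proposal is correct, and it takes a genuinely different route from the paper. The paper proves Lemma \ref{lem-011} from scratch: it recomputes $(n-a)v^{-1}\bmod n$ explicitly, splits into the even case and four odd subcases (the bulk range $1\le a\le \frac{3^{(m-1)/2}-1}{2}$ plus the three boundary values), and verifies $w_3\equiv 0$ or $1\pmod 4$ each time --- essentially mirroring the proof of Lemma \ref{lem-121} with signs flipped. You instead observe that since $n=3^m-1$ has all base-$3$ digits equal to $2$, the complement map gives $w_3(n-x)=2m-w_3(x)$ with no borrows, and that $(n-a)v^{-1}\bmod n = n-(av^{-1}\bmod n)$ because $av^{-1}\bmod n\neq 0$; combined with $2m\equiv 2\pmod 4$ (as $m\equiv 1\pmod 4$), this sends the residues $1,2$ guaranteed by Lemma \ref{lem-121} to $1,0$, i.e.\ into $T_{(0,1,m)}$. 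All the steps check out: the same $v$, $\delta$, and hypothesis $m\equiv 1\pmod 4>1$ appear in both lemmas, $\gcd(v,n)=1$ is inherited, and your residue bookkeeping matches the paper's case-by-case outcomes ($m-1$, $m$, $2m-1$, etc.). What your approach buys is economy and robustness: it eliminates the duplicated case analysis, only needs $m$ odd (so the identical argument also derives Lemma \ref{lem-012} from Lemma \ref{lem-122}, and Lemmas \ref{lem-231}--\ref{lem-232} from Lemmas \ref{lem-031}--\ref{lem-032}, whose proofs the paper omits as ``very similar''); what the paper's direct computation buys is a self-contained proof whose explicit digit expansions make the sharpness of $\delta$ visible, at the cost of repetition.
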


\begin{proof}
Since $v=(3^{(m-1)/2}+1)/2$ and $m/\gcd((m-1)/2,m)=1$ is odd, we have
$$\gcd(2v,n) = \gcd(3^{(m-1)/2}+1,3^m-1) =2.$$
It then follows that $\gcd(v, n)=1$.

We now prove the desired second conclusion. It can be easily verified that
\begin{align*}
  v^{-1} \bmod{n}=(3^m-1)/2-3^{(m+1)/2}+3.
\end{align*}
Consider now any positive $a$ with $a \leq \delta -1 = (3^{(m-1)/2}+11)/2 < 3^{(m-1)/2}-1$. 

When $a$ is even, we have
\begin{align*}
(n-a)v^{-1} \bmod{n} =3^{(m+1)/2} a -3a,
\end{align*}
and 
\begin{align*}
3^{m-1}(n-a)v^{-1} \bmod{n} &=3^{(m-1)/2} a -a\\
&=3^{(m-1)/2} (a-1)+3^{(m-1)/2}-1-(a-1).
\end{align*}
It then follows that
\begin{align*}
w_3((n-a) v^{-1} \bmod{n})&=w_3(3^{m-1}(n-a) v^{-1} \bmod{n})\\
&= w_3(a-1)+w_3( 3^{(m-1)/2}-1-(a-1)) \\
&=w_3(3^{(m-1)/2}-1)=m-1.
\end{align*}
Consequently,
$$
w_3((n-a) v^{-1} \bmod{n})\equiv 0 \pmod{4}.
$$

When $a$ is odd and $a \leq \delta -1$, we have $1 \leq a \leq \delta -2=(3^{(m-1)/2} +9)/2$.
 It is easily verified that
$$(n-a) v^{-1} \bmod{n} =
\left((3^m-1)/2+ (3^{(m-1)/2}-1)3a\right)\bmod{n},$$
 and
\begin{align}
3^{m-1} (n-a) v^{-1} \bmod{n} =&
(3^m-1)/2+ 3^{(m-1)/2}a-a \notag	\\
=~&3^{m-1}+3^{(m-1)/2}\left( \frac{3^{(m-1)/2}-1}2+a \right)+\frac{3^{(m-1)/2}-1}2-a.\label{EQ:011}
\end{align} 
Clearly, $w_3( (n-a) v^{-1} \bmod{n} )=w_3(3^{m-1}(n-a) v^{-1} \bmod{n})$. The rest of the proof can be divided into the following four cases.
\begin{enumerate}
\item Let $1\leq a\leq \frac{3^{(m-1)/2}-1}2$ be odd. It follows from (\ref{EQ:011}) that 
\begin{align*}
&3^{m-1}(n-a) v^{-1} \bmod{n} \\ &=3^{m-1}+3^{(m-1)/2}\left(3^{(m-1)/2}-1-\left(\frac{3^{(m-1)/2}-1}2-a\right)\right)+ \frac{3^{(m-1)/2}-1}2-a.
\end{align*} 
Note that $(3^{(m-1)/2}-1)/2-a\le3^{(m-1)/2}-1$, then we have
\begin{align*}
w_3( a v^{-1} \bmod{n} )&=1+w_3(3^{(m-1)/2}-1-((3^{(m-1)/2}-1)/2-a))+w_3( (3^{(m-1)/2}-1)/2-a)\\
&=1+w_3(3^{(m-1)/2}-1)\\
&=m\equiv 1\pmod{4}.
\end{align*}	
\item Let $ a= \frac{3^{(m-1)/2}+1}2$. It follows from (\ref{EQ:011}) that
$$w_3( (n-a) v^{-1} \bmod{n} )=w_3(2\cdot3^{m-1}-1)=2m-1\equiv 1\pmod{4}.$$	
\item Let $ a= \frac{3^{(m-1)/2}+5}2$. It follows from (\ref{EQ:011}) that
$$w_3( (n-a) v^{-1} \bmod{n} )=w_3(2\cdot3^{m-1}+2\cdot3^{(m-1)/2}-3)=m\equiv 1\pmod{4}.$$	
\item Let $ a= \frac{3^{(m-1)/2}+9}2$. It follows from (\ref{EQ:011}) that
$$w_3( (n-a) v^{-1} \bmod{n} )=w_3(2\cdot3^{m-1}+3^{(m+1)/2}+3^{(m-1)/2}-3-2)=m\equiv 1\pmod{4}.$$	
\end{enumerate}
The desired second conclusion then follows.
\end{proof}

\begin{lemma}\label{lem-012}
Let $m \equiv 3 \pmod{4}>3$.  Let $v=(3^{(m-1)/2}-1)/2$ and $\delta=(3^{(m-1)/2}+11)/2$. Then $\gcd(v, n)=1$. Define
$$
T_{(0,1,m)}(v)=\{vi \bmod n: i \in T_{(0,1,m)} \}.
$$
Then
$$
\{n-(\delta-1), \ldots, n-2, n-1\} \subset T_{(0,1,m)}(v).
$$
\end{lemma}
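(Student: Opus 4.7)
The plan is to adapt Lemma \ref{lem-122} via a simple complementation identity. The gcd and inverse computations are identical to those opening the proof of Lemma \ref{lem-122} since $v$ and $n$ are unchanged: writing $m = 2k+1$ gives $\gcd((m-1)/2, m) = \gcd(k, 2k+1) = 1$, so $\gcd(2v, n) = \gcd(3^{(m-1)/2} - 1, 3^m - 1) = 2$ and hence $\gcd(v, n) = 1$, with $v^{-1} \equiv (3^m-1)/2 - 3^{(m+1)/2} - 3 \pmod n$.

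For the containment, my preferred route is a symmetry between $T_{(0,1,m)}$ and $T_{(1,2,m)}$. Writing the 3-adic expansion of $n - j = 3^m - 1 - j$ as the digit-wise complement (with respect to the digit $2$) of that of $j$, one obtains $w_3(n - j) = 2m - w_3(j)$ for $1 \le j \le n-1$. Since $m \equiv 3 \pmod 4$ gives $2m \equiv 2 \pmod 4$, this yields $w_3(n - j) \equiv 2 - w_3(j) \pmod 4$. As the map $r \mapsto 2 - r$ bijects $\{0, 1\}$ onto $\{2, 1\}$ modulo $4$, we obtain
\[
T_{(0,1,m)} = \{\, n - j : j \in T_{(1,2,m)} \,\}.
\]
Multiplying by $v$ modulo $n$ then gives $T_{(0,1,m)}(v) = n - T_{(1,2,m)}(v)$, so the desired inclusion $\{n - (\delta - 1), \ldots, n - 2, n - 1\} \subset T_{(0,1,m)}(v)$ is equivalent to $\{1, 2, \ldots, \delta - 1\} \subset T_{(1,2,m)}(v)$, which is exactly Lemma \ref{lem-122}.

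Alternatively, a direct argument parallel to Lemma \ref{lem-122} would split $a$ with $1 \le a \le \delta - 1$ by parity, compute $(n - a) v^{-1} \equiv -a v^{-1} \pmod n$, and read off the 3-adic expansion by exploiting that multiplication by $3^{m-1}$ cyclically rotates the 3-adic digits. For even $a$ one expects $w_3 \equiv 0 \pmod 4$ (coming from $2m - (m+1)$ or similar), while for odd $a$ one would further split into the generic range $1 \le a \le (3^{(m-1)/2} - 1)/2$ and the two boundary values $a \in \{(3^{(m-1)/2} + 3)/2,\ (3^{(m-1)/2} + 7)/2\}$. The main obstacle in the direct route is the careful borrow bookkeeping at these two boundary odd values; the symmetry route sidesteps this entirely and is the cleanest way to finish.
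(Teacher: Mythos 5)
Your proof is correct, and its main route is genuinely different from the paper's. The paper proves this lemma by redoing the digit-level computation from scratch: it computes $v^{-1}\bmod n$, rewrites $3^{m-1}(n-a)v^{-1}\bmod n$ in $3$-adic form, and verifies $w_3\equiv 0\pmod 4$ for even $a$ and $w_3\equiv 1\pmod 4$ for odd $a$, splitting the odd case into the generic range and the two boundary values $a=(3^{(m-1)/2}+3)/2$ and $a=(3^{(m-1)/2}+7)/2$ --- exactly the borrow bookkeeping your alternative sketch anticipates. Your primary route instead uses the complement identity $w_3(n-j)=2m-w_3(j)$ for $1\le j\le n-1$ (valid digitwise, since $n=3^m-1$ has every $3$-adic digit equal to $2$, so no borrows occur), which for odd $m$ gives $w_3(n-j)\equiv 2-w_3(j)\pmod 4$ and hence $T_{(0,1,m)}=\{\,n-j: j\in T_{(1,2,m)}\,\}$; multiplying by $v$ turns this into $T_{(0,1,m)}(v)=\{\,n-t: t\in T_{(1,2,m)}(v)\,\}$, and Lemma \ref{lem-122} (same $v$, same $\delta$) immediately yields the stated inclusion. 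This buys real economy: it avoids duplicating the case analysis, makes transparent why the parameters in Lemmas \ref{lem-122} and \ref{lem-012} coincide, and the same symmetry (sending residues $\{0,3\}$ to $\{2,3\}$) would likewise derive Lemmas \ref{lem-231} and \ref{lem-232} from Lemmas \ref{lem-031} and \ref{lem-032}, which the paper omits as ``similar.'' One small point worth making explicit in the gcd step (the paper also glosses over it): from $\gcd(2v,n)=2$ one only gets $\gcd(v,n)\mid 2$, and one should note $v$ is odd --- true here since $(m-1)/2$ is odd, so $3^{(m-1)/2}-1\equiv 2\pmod 4$ --- to conclude $\gcd(v,n)=1$.
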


\begin{proof} 
Since $v=(3^{(m-1)/2}-1)/2$, we have
\begin{align*}
  \gcd(2v,n) & = \gcd(3^{(m-1)/2}-1,3^m-1) \\
   & = 3^{\gcd((m-1)/2,m)}-1\\
   & =2.
\end{align*}
It then follows that $\gcd(v, n)=1$.

We now prove the desired second conclusion. It can be easily verified that
\begin{align*}
  v^{-1} \bmod{n}=(3^m-1)/2-3^{(m+1)/2}-3.
\end{align*}

Consider now any positive $a$ with $a \leq \delta -1 = (3^{(m-1)/2}+9)/2 < 3^{(m-1)/2}-1$. 

When $a$ is even, we have
\begin{align*}
(n-a)  v^{-1} \bmod{n} =
3^{(m+1)/2} a +3a,
\end{align*}
and 
\begin{align*}
3^{m-1}(n-a)  v^{-1} \bmod{n} =
3^{(m-1)/2} a +a.
\end{align*}
It then follows that
$$
w_3((n-a)  v^{-1} \bmod{n})= w_3(3^{m-1}(n-a)  v^{-1} \bmod{n})= 2w_3(a).
$$
Notice that $w_3(a)\equiv a \equiv 0\pmod{2}$. Consequently,
$$
w_3((n-a)  v^{-1} \bmod{n})\equiv 0\pmod{4}.
$$

When $a$ is odd and $a \leq \delta -1$, we have $1 \leq a \leq \delta -2=(3^{(m-1)/2} +7)/2$. It is easily verified that
$$(n-a) v^{-1} \bmod{n} =
((3^m-1)/2+ (3^{(m-1)/2}+1)3a)\bmod{n},$$
 and
\begin{align}
3^{m-1} (n-a) v^{-1} \bmod{n} &=(3^m-1)/2+ (3^{(m-1)/2}+1)a \notag	\\
&=3^{m-1}+3^{(m-1)/2}\left( \frac{3^{(m-1)/2}-1}2+a \right)+\frac{3^{(m-1)/2}-1}2+a.\label{EQ:012}
\end{align} 
Clearly, $w_3( (n-a) v^{-1} \bmod{n} )=w_3(3^{m-1}(n-a) v^{-1} \bmod{n})$. The rest of the proof can be divided into the following three cases.
\begin{enumerate}
\item Let $1\leq a\leq \frac{3^{(m-1)/2}-1}2$ be odd. It follows from (\ref{EQ:012}) that
$$w_3( (n-a) v^{-1} \bmod{n} )=1+2w_3( (3^{(m-1)/2}-1)/2+a).$$	
Note that $(3^{(m-1)/2}-1)/2+a$ is even, we have
$$w_3( (3^{(m-1)/2}-1)/2+a)\equiv 0\pmod{2}.$$ It then follows that $w_3( (n-a) v^{-1} \bmod{n} )\equiv 1\pmod{4}$.
\item Let $ a= \frac{3^{(m-1)/2}+3}2$. It follows from (\ref{EQ:012}) that
$$w_3( (n-a) v^{-1} \bmod{n} )=w_3(3^{m-1}+(3^{(m-1)/2}+1)^2)=5\equiv 1\pmod{4}.$$	
\item Let $ a= \frac{3^{(m-1)/2}+7}2$. It follows from (\ref{EQ:012}) that
\begin{align*}
w_3( (n-a) v^{-1} \bmod{n} ) &=
w_3(3^{m-1}+(3^{(m-1)/2}+1)(3^{(m-1)/2}+3)) \\
&=w_3(2\cdot 3^{m-1}+3^{(m+1)/2}+3^{(m-1)/2}+3 ) \\
&= 5\equiv 1\pmod{4}.
\end{align*}
\end{enumerate}

The desired second conclusion then follows.
\end{proof}

The following theorem gives two lower bounds on the minimum distance of $\C_{(0,1,m)}$, which are obtained from Lemmas \ref{lem-011} and \ref{lem-012} directly
by employing the BCH bound on cyclic codes. 
The dimension of the code follows from Theorems \ref{thm-dim43} and  \ref{thm-dim41}. 
	
\begin{theorem}\label{thm01}
Let $m \geq 3$ be odd.
Then the ternary cyclic code $\C_{(0,1,m)}$ has  parameters $[n, k, d]$ with
\begin{eqnarray*}
d \geq \left\{
\begin{array}{ll}
(3^{(m-1)/2}+13)/2 & {\rm if~} m \equiv 1 \pmod{4}, \\
(3^{(m-1)/2}+11)/2  & {\rm if~} m \equiv 3 \pmod{4}
\end{array}
\right.
\end{eqnarray*}
and
\begin{eqnarray*}
k=\left\{
\begin{array}{ll}
\frac{n}{2} & {\rm if~} m \equiv 1 \pmod{4}, \\
\frac{n+2}{2} & {\rm if~} m \equiv 3 \pmod{4}.
\end{array}
\right.
\end{eqnarray*}
\end{theorem}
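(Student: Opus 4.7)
The proof I would write mirrors the pattern already established for Theorems \ref{thm03} and \ref{thm12}: everything has been packaged into Lemmas \ref{lem-011} and \ref{lem-012}, so what remains is a short invocation of the BCH bound plus a citation of the dimension formulas.

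First I would split on the parity class of $m$ modulo $4$. In each case the corresponding lemma produces an integer $v$ with $\gcd(v,n)=1$ such that the shifted defining set $T_{(0,1,m)}(v)=\{vj \bmod n : j \in T_{(0,1,m)}\}$ contains the block of $\delta-1$ consecutive integers $\{n-(\delta-1),\ldots,n-2,n-1\}$, with $\delta=(3^{(m-1)/2}+13)/2$ when $m\equiv 1\pmod 4$ and $\delta=(3^{(m-1)/2}+11)/2$ when $m\equiv 3\pmod 4$. Since $\gcd(v,n)=1$, the element $\alpha^{v}$ is also a primitive $n$-th root of unity, and the code $\C_{(0,1,m)}$ may equally well be described as the cyclic code of length $n$ whose defining set with respect to $\alpha^{v}$ is precisely $T_{(0,1,m)}(v)$.

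Next I would apply the BCH bound. The set $\{n-(\delta-1),\ldots,n-2,n-1\}$ is a run of $\delta-1$ consecutive integers modulo $n$ (it equals $\{-(\delta-1),\ldots,-1\}\bmod n$), and it is contained in the defining set of $\C_{(0,1,m)}$ relative to the primitive $n$-th root of unity $\alpha^{v}$. The BCH bound therefore yields $d(\C_{(0,1,m)})\geq \delta$, which is exactly the bound claimed in the two cases of the theorem.

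Finally, the dimension is not proved here at all: when $m\equiv 1\pmod 4$ Theorem \ref{thm-dim41} gives $\dim \C_{(0,1,m)}=n/2$, and when $m\equiv 3\pmod 4$ Theorem \ref{thm-dim43} gives $\dim \C_{(0,1,m)}=(n+2)/2$, matching the two branches of the theorem. There is essentially no obstacle at this stage; all the arithmetic difficulty was absorbed into proving the consecutive-run inclusion in Lemmas \ref{lem-011} and \ref{lem-012}. The only thing to be slightly careful about is to note explicitly that ``$\delta-1$ consecutive positions in the defining set'' and the resulting BCH bound $\delta$ on the minimum distance work equally well whether the run starts at $1$ (as in $\C_{(0,3,m)}$ and $\C_{(1,2,m)}$) or ends at $n-1$ (as here), since the BCH bound is invariant under translation of the defining set by any integer coprime to $n$.
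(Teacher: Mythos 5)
Your proposal is correct and is essentially identical to the paper's own (very short) proof: the two lower bounds follow from Lemmas \ref{lem-011} and \ref{lem-012} via the BCH bound applied to the run of $\delta-1$ consecutive elements $\{n-(\delta-1),\ldots,n-1\}$ in the transformed defining set, and the dimension is quoted from Theorems \ref{thm-dim41} and \ref{thm-dim43}. One pedantic remark: since $T_{(0,1,m)}(v)=\{vi \bmod n\}$, it is the defining set of $\C_{(0,1,m)}$ with respect to the primitive root $\alpha^{v^{-1}}$ (not $\alpha^{v}$), but as $\gcd(v,n)=1$ this does not affect the BCH argument or the conclusion.
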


\section{The ternary cyclic codes $\C_{(2,3,m)}$ } \label{sec-four4}

The parameters of the code $\C_{(2,3,3)}$  are documented in the following example
and are the parameters of the best linear codes known \cite{G}. This code is an optimal
cyclic code \cite[Table A.92]{DingBK18}.

\begin{example}
Let $m=3$. Then $\C_{(2,3,m)}$ has parameters $[26, 13, 8]$ and generator polynomial 
$$ 
x^{13} + 2x^{10} + x^9 + x^7 + x^5 + x^3 + 2x^2 + 1. 
$$ 
\end{example}

The following two lemmas on the defining set $T_{(2,3,m)}$ are necessary to develop the lower bound on 
the minimum distance of $\C_{(2,3,m)}$.
Their proofs are very similar to those of Lemmas \ref{lem-011} and \ref{lem-012} and omitted here.

\begin{lemma}\label{lem-231}
Let $m \equiv 1 \pmod{4}>1$.  Let $v=(3^{(m+1)/2}-1)/2$ and $\delta=(3^{(m-1)/2}+5)/2$. Then $\gcd(v, n)=1$. Define
$$
T_{(2,3,m)}(v)=\{vi \bmod n: i \in T_{(2,3,m)} \}.
$$
Then
$$
\{n-(\delta-1), \ldots, n-2, n-1\} \subset T_{(2,3,m)}(v).
$$
\end{lemma}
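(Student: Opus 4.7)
The plan is to bypass any direct case analysis and instead reduce Lemma~\ref{lem-231} to the already-proved Lemma~\ref{lem-031} via the involution $j\mapsto n-j$ on $\{1,\ldots,n-1\}$. The claim $\gcd(v,n)=1$ uses the same $v=(3^{(m+1)/2}-1)/2$ as in Lemma~\ref{lem-031}, so its first paragraph carries over verbatim; only the inclusion requires new work.

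The key observation is that $n=\sum_{i=0}^{m-1} 2\cdot 3^i$, so writing $j = \sum_{i=0}^{m-1} j_i 3^i$ with $j_i\in\{0,1,2\}$ yields $n-j = \sum_{i=0}^{m-1}(2-j_i)3^i$ and hence $w_3(n-j)=2m-w_3(j)$. For $m\equiv 1\pmod 4$ we have $2m\equiv 2\pmod 4$, so the map $j\mapsto n-j$ exchanges $T_0$ with $T_2$ and sends $T_3$ to itself. Consequently
$$T_{(2,3,m)} = T_2\cup T_3 = \{n-j : j\in T_0\cup T_3\} = \{n-j : j\in T_{(0,3,m)}\}.$$

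Next I would transfer this through multiplication by $v$. Since $v(n-j)\equiv -vj\pmod n$, and $\gcd(v,n)=1$ forces $vj\bmod n\in\{1,\ldots,n-1\}$ for every $j\in\{1,\ldots,n-1\}$, we have $-vj\bmod n = n-(vj\bmod n)$. Therefore $T_{(2,3,m)}(v) = \{n-t : t\in T_{(0,3,m)}(v)\}$. Invoking Lemma~\ref{lem-031}, which supplies $\{1,2,\ldots,\delta-1\}\subset T_{(0,3,m)}(v)$ for the same $v$ and $\delta$, immediately yields $\{n-(\delta-1),\ldots,n-2,n-1\}\subset T_{(2,3,m)}(v)$, as desired.

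The only subtle point is that the residue $m\equiv 1\pmod 4$ is precisely what produces the pairing $T_0\leftrightarrow T_2$, $T_3\leftrightarrow T_3$ under the involution, which in turn pairs $T_{(0,3,m)}$ with $T_{(2,3,m)}$; a different residue of $m$ modulo $4$ would not allow the reduction to Lemma~\ref{lem-031}. Beyond that bookkeeping, there is no genuine obstacle. If one preferred the case-analytic template of Lemmas~\ref{lem-011} and~\ref{lem-012}, the analogous hurdle would be expanding $(n-a)v^{-1}\bmod n$ in base $3$ for each parity of $a$ and for the handful of boundary odd values of $a$; this is routine but considerably more tedious than the involution approach.
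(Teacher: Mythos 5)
Your proof is correct, but it takes a genuinely different route from the paper's. The paper omits the proof of Lemma~\ref{lem-231}, indicating it is obtained by the same direct computation as Lemmas~\ref{lem-011} and \ref{lem-012}: expand $(n-a)v^{-1}\bmod n$ in base $3$, split into the even case and several odd boundary cases of $a$, and check the $3$-weight modulo $4$ each time. You instead reduce the statement to the already-proved Lemma~\ref{lem-031} via the digit-complement involution $j\mapsto n-j$, using $w_3(n-j)=2m-w_3(j)$ (valid since $n=3^m-1$ has all digits equal to $2$, so no borrows occur) together with $v(n-j)\equiv n-(vj\bmod n)\pmod n$; all steps check out, including the identification $T_{(2,3,m)}=\{n-j: j\in T_{(0,3,m)}\}$ and the matching of $v$ and $\delta$ between the two lemmas. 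Your argument is shorter, reuses existing work, and the same involution proves Lemma~\ref{lem-232} from Lemma~\ref{lem-032} (and likewise links Lemmas~\ref{lem-011}, \ref{lem-012} to Lemmas~\ref{lem-121}, \ref{lem-122}), so it actually unifies the second half of the paper's case analyses; the paper's direct computation, by contrast, is self-contained and does not depend on the earlier lemma. One small correction to your closing remark: since $2m\equiv 2\pmod 4$ for every odd $m$, the pairing $T_0\leftrightarrow T_2$, $T_3\to T_3$ holds for $m\equiv 3\pmod 4$ as well; the residue $m\equiv 1\pmod 4$ matters only because it is the hypothesis under which Lemma~\ref{lem-031} (rather than Lemma~\ref{lem-032}) is available as the target of the reduction.
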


\begin{lemma}\label{lem-232}
Let $m \equiv 3 \pmod{4}>3$.  Let $v=(3^{(m+1)/2}+1)/2$ and $\delta=(3^{(m-1)/2}+7)/2$. Then $\gcd(v, n)=1$. Define
$$
T_{(2,3,m)}(v)=\{vi \bmod n: i \in T_{(2,3,m)} \}.
$$
Then
$$
\{n-(\delta-1), \ldots, n-2, n-1\} \subset T_{(2,3,m)}(v).
$$
\end{lemma}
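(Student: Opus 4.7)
The plan is to reduce Lemma \ref{lem-232} to the already-proved Lemma \ref{lem-032} via the involution $j \mapsto n - j$ on $\mathbb{Z}_n \setminus \{0\}$, exploiting the fact that both lemmas use exactly the same $v$ and $\delta$. First, the coprimality $\gcd(v, n) = 1$ is proved verbatim as in Lemma \ref{lem-032}: since $m/\gcd((m+1)/2, m) = 1$ is odd, $\gcd(2v, n) = \gcd(3^{(m+1)/2}+1,\, 3^m - 1) = 2$, hence $\gcd(v, n) = 1$.

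The main ingredient is a digit-complement identity. For any $i \in \{1, \ldots, n-1\}$ with ternary expansion $i = \sum_{j=0}^{m-1} i_j 3^j$, the digits of $n - i = 3^m - 1 - i$ are $2 - i_j$, so $w_3(n - i) = 2m - w_3(i)$. Because $m \equiv 3 \pmod 4$ gives $2m \equiv 2 \pmod 4$, one obtains $w_3(n - i) \equiv 2 - w_3(i) \pmod 4$. A short four-way check on residues of $w_3(i) \bmod 4$ yields
\[
i \in T_0 \cup T_3 \;\Longleftrightarrow\; n - i \in T_2 \cup T_3,
\]
that is, $i \in T_{(0,3,m)}$ if and only if $n - i \in T_{(2,3,m)}$.

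Combining these pieces, for each $a \in \{1, 2, \ldots, \delta - 1\}$ Lemma \ref{lem-032} supplies some $i_a \in T_{(0,3,m)}$ with $v \cdot i_a \equiv a \pmod n$. The symmetry above gives $n - i_a \in T_{(2,3,m)}$, and $v(n - i_a) \equiv -a \equiv n - a \pmod n$, so $n - a \in T_{(2,3,m)}(v)$. Letting $a$ range over $\{1, \ldots, \delta - 1\}$ produces the desired inclusion $\{n - (\delta - 1), \ldots, n - 2, n - 1\} \subset T_{(2,3,m)}(v)$.

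I anticipate no serious obstacle, as the argument collapses to the weight-complement identity plus one application of Lemma \ref{lem-032}. The author's remark that the proof parallels those of Lemmas \ref{lem-011} and \ref{lem-012} hints at an alternative, more computational route in which one would directly evaluate $w_3((n - a) v^{-1} \bmod n) \bmod 4$ by splitting on the parity of $a$ and the ranges $1 \le a \le (3^{(m-1)/2}-1)/2$ versus the two boundary odd values $a = (3^{(m-1)/2}+3)/2$ and $a = (3^{(m-1)/2}+5)/2$; this would reproduce the same conclusion at the cost of much longer bookkeeping. The symmetry route is cleaner precisely because the coincidence of $(v, \delta)$ between Lemmas \ref{lem-032} and \ref{lem-232} signals that the two defining sets are negation-images of one another.
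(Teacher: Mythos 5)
Your proof is correct, but it takes a genuinely different route from the paper. The paper omits the proof of this lemma, stating that it is ``very similar to those of Lemmas \ref{lem-011} and \ref{lem-012},'' i.e.\ the intended argument is the direct computation: compute $v^{-1} \bmod n$, write out $(n-a)v^{-1} \bmod n$ explicitly, and evaluate $w_3$ modulo $4$ case by case according to the parity of $a$ and whether $a$ lies in the bulk range or at the boundary. You instead reduce the statement to the already-proved Lemma \ref{lem-032} via the complementation $i \mapsto n-i$: since $n = 3^m-1$ has all ternary digits equal to $2$, one has $w_3(n-i) = 2m - w_3(i)$ with no carries, and for $m$ odd this gives $w_3(n-i) \equiv 2 - w_3(i) \pmod 4$, so $i \in T_{(0,3,m)}$ if and only if $n-i \in T_{(2,3,m)}$; combined with $v(n-i_a) \equiv n-a \pmod n$, the consecutive set $\{1,\dots,\delta-1\}$ of Lemma \ref{lem-032} maps exactly onto the set $\{n-(\delta-1),\dots,n-1\}$ claimed here, and the $\gcd$ part is verbatim the same. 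I checked the four-way residue bookkeeping ($0\mapsto 2$, $3\mapsto 3$, and conversely) and the range argument; they are sound, and the hypotheses ($m \equiv 3 \pmod 4 > 3$, identical $v$ and $\delta$) match those of Lemma \ref{lem-032}. Your approach buys a shorter, structurally transparent proof that also explains \emph{why} the pairs $(v,\delta)$ coincide between the $(0,3)$ and $(2,3)$ lemmas, whereas the paper's computational template is self-contained and uniform across all four code families without invoking a symmetry. One peripheral quibble: in your closing remark about the hypothetical computational route, the boundary odd value would be only $a = (3^{(m-1)/2}+3)/2$, since $(3^{(m-1)/2}+5)/2 = \delta-1$ is even when $m \equiv 3 \pmod 4$; this does not affect your actual argument.
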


The following theorem gives two lower bounds on the minimum distance of $\C_{(2,3,m)}$, which are obtained from Lemmas \ref{lem-231} and \ref{lem-232} directly
by employing the BCH bound on cyclic codes. 
The dimension of the code follows from Theorems \ref{thm-dim43} and  \ref{thm-dim41}. 
	
\begin{theorem}\label{thm23}
Let $m \geq 3$ be odd.
Then the ternary cyclic code $\C_{(2,3,m)}$ has  parameters $[n, k, d]$ with
\begin{eqnarray*}
d  \geq \left\{
\begin{array}{ll}
\frac{3^{(m-1)/2}+7}{2}  & {\rm if ~} m \equiv 3 \pmod{4}, \\
\frac{3^{(m-1)/2}+5}{2} & {\rm if~} m \equiv 1 \pmod{4}
\end{array}
\right.
\end{eqnarray*}
and
\begin{eqnarray*}
k=\left\{
\begin{array}{ll}
\frac{n}{2} & {\rm if~} m \equiv 3 \pmod{4}, \\
\frac{n+2}{2} & {\rm if~} m \equiv 1 \pmod{4}.
\end{array}
\right.
\end{eqnarray*}
\end{theorem}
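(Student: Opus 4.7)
The plan is to reduce the distance bound to a direct application of the BCH bound, using Lemmas \ref{lem-231} and \ref{lem-232} to supply a long run of consecutive integers in a rescaled defining set. The dimension part is already handled: since $m$ is odd, Theorems \ref{thm-dim43} and \ref{thm-dim41} give $\dim(\mathcal{C}_{(2,3,m)}) = n/2$ when $m \equiv 3 \pmod 4$ and $\dim(\mathcal{C}_{(2,3,m)}) = (n+2)/2$ when $m \equiv 1 \pmod 4$, so only the distance bound requires argument.

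First I would recall the standard fact that if $\gcd(v,n)=1$, then $\beta=\alpha^{v^{-1} \bmod n}$ is another primitive $n$-th root of unity in $\mathbb{F}_{3^m}$, and the cyclic code $\mathcal{C}_{(2,3,m)}$ has the same parameters whether we describe it with defining set $T_{(2,3,m)}$ (relative to $\alpha$) or with defining set $T_{(2,3,m)}(v)=\{vi \bmod n : i\in T_{(2,3,m)}\}$ (relative to $\beta$). Thus any lower bound on the minimum distance that can be derived from $T_{(2,3,m)}(v)$ applies verbatim to $\mathcal{C}_{(2,3,m)}$.

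Next I would split on the residue of $m \bmod 4$. For $m \equiv 1 \pmod 4$, I invoke Lemma \ref{lem-231} with $v=(3^{(m+1)/2}-1)/2$ and $\delta=(3^{(m-1)/2}+5)/2$, obtaining
\[
\{n-(\delta-1), n-(\delta-2),\ldots, n-1\} \subset T_{(2,3,m)}(v).
\]
These $\delta-1$ elements are consecutive integers modulo $n$, so the BCH bound yields $d \geq \delta = (3^{(m-1)/2}+5)/2$. For $m \equiv 3 \pmod 4$, the same argument with Lemma \ref{lem-232}, $v=(3^{(m+1)/2}+1)/2$ and $\delta=(3^{(m-1)/2}+7)/2$, gives $d \geq (3^{(m-1)/2}+7)/2$.

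I do not expect genuine difficulty here since the heavy lifting (the $3$-weight computations that place these consecutive residues inside $T_{(2,3,m)}(v)$) is already done in the lemmas, and the coprimality of $v$ with $n$ is part of the lemma statements. The only subtlety to watch is bookkeeping: verifying that a run of $\delta-1$ consecutive values in the defining set yields designed distance $\delta$ (not $\delta-1$) in the BCH bound, and checking the small base case $m=3$ separately from the example, since Lemma \ref{lem-232} assumes $m>3$. Combining the two cases with the dimension results then delivers the theorem.
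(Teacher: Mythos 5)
Your proposal is correct and follows essentially the same route as the paper: the paper also obtains the distance bounds by applying the BCH bound to the consecutive runs supplied by Lemmas \ref{lem-231} and \ref{lem-232} (via the equivalent description of the code with respect to the primitive root $\alpha^{v^{-1}\bmod n}$), and takes the dimension from Theorems \ref{thm-dim43} and \ref{thm-dim41}. Your remark about handling $m=3$ separately (since Lemma \ref{lem-232} requires $m>3$) is a sensible bookkeeping point that the paper passes over silently, and the $[26,13,8]$ example confirms the stated bound there.
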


\section{The dual codes of these ternary cyclic codes } \label{sec-dual}

In this section, we investigate the parameters of the dual codes of $\C_{(0,3,m)}$, $\C_{(1,2,m)}$,
$\C_{(0,1,m)}$ and $\C_{(2,3,m)}$ for odd $m$.

\subsection{The ternary cyclic codes $\C_{(0,3,m)}^\perp$}

The parameters of the code $\C_{(0,3,3)}^\perp$ are documented in the following example
 and are the parameters of the best linear codes known \cite{G}. This code is an optimal
cyclic code \cite[Table A.92]{DingBK18}.

\begin{example}
Let $m=3$. Then
the dual code $\C_{(0,3,m)}^\perp$ has parameters $[26, 13, 8]$ and generator polynomial 
$$
x^{13} + x^{10} + 2x^9 + x^6 + 2x^4 + x^3 + 2x^2 + 2. 
$$ 
\end{example}

\begin{theorem}\label{thm03dual}
Let $m \geq 3$ be odd.
Then the ternary cyclic code $\C_{(0,3,m)}^\perp$ has  parameters $[n, k, d]$ with
\begin{eqnarray*}
d \geq \left\{
\begin{array}{ll}
(3^{(m-1)/2}+15)/2 & {\rm if~} m \equiv 1 \pmod{4}, \\
(3^{(m-1)/2}+13)/2  & {\rm if~} m \equiv 3 \pmod{4}
\end{array}
\right.
\end{eqnarray*}
and
\begin{eqnarray*}
k=\left\{
\begin{array}{ll}
\frac{n-2}{2} & {\rm if~} m \equiv 1 \pmod{4}, \\
\frac{n}{2} & {\rm if~} m \equiv 3 \pmod{4}.
\end{array}
\right.
\end{eqnarray*}
\end{theorem}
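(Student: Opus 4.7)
The plan is to identify the defining set of $\C_{(0,3,m)}^\perp$ with a small modification of $T_{(0,1,m)}$, so that Lemmas~\ref{lem-011} and~\ref{lem-012} can be applied almost directly. Writing $n = 3^m - 1 = \sum_{i=0}^{m-1} 2\cdot 3^i$, I first observe that for $1 \le j \le n-1$ the $3$-adic digits of $n-j$ are the complements $2 - j_i$ of the digits of $j$, so $w_3(n-j) = 2m - w_3(j)$. Since $m$ is odd, $2m \equiv 2 \pmod 4$, and hence $w_3(n-j) \equiv 2 - w_3(j) \pmod 4$. This yields the compact action $-T_0 = T_2$, $-T_1 = T_1$, $-T_2 = T_0$, and $-T_3 = T_3$.

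Using the standard description $T^\perp = \{-i \bmod n : i \in \mathbb{Z}_n \setminus T\}$ for the defining set of the dual of a cyclic code, together with $\mathbb{Z}_n \setminus T_{(0,3,m)} = \{0\} \cup T_1 \cup T_2$, I obtain
\[
T_{(0,3,m)}^\perp = \{0\} \cup (-T_1) \cup (-T_2) = \{0\} \cup T_0 \cup T_1 = \{0\} \cup T_{(0,1,m)}.
\]
The dimension claim follows at once from~(\ref{equ-Ti}): $\dim \C_{(0,3,m)}^\perp = n - 1 - |T_{(0,1,m)}|$, which equals $(n-2)/2$ when $m \equiv 1 \pmod 4$ and $n/2$ when $m \equiv 3 \pmod 4$.

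For the minimum distance, I set $T_{(0,3,m)}^\perp(v) = \{vi \bmod n : i \in T_{(0,3,m)}^\perp\} = \{0\} \cup T_{(0,1,m)}(v)$ for any $v$ coprime to $n$. Taking $v$ and $\delta$ as in Lemma~\ref{lem-011} when $m \equiv 1 \pmod 4$, or Lemma~\ref{lem-012} when $m \equiv 3 \pmod 4$ and $m > 3$, gives $\{n-(\delta-1), \ldots, n-1\} \subset T_{(0,1,m)}(v)$. Since $0 \equiv n \pmod n$ also lies in $T_{(0,3,m)}^\perp(v)$, the set $\{n-(\delta-1), \ldots, n-1, 0\}$ consists of $\delta$ consecutive residues modulo $n$ contained in $T_{(0,3,m)}^\perp(v)$, and the BCH bound yields $d \ge \delta + 1$, namely $(3^{(m-1)/2}+15)/2$ for $m \equiv 1 \pmod 4$ and $(3^{(m-1)/2}+13)/2$ for $m \equiv 3 \pmod 4$. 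The boundary case $m = 3$ falls outside the scope of Lemma~\ref{lem-012} but is settled directly by the preceding example, which realizes $d = 8$. There is no genuinely new obstacle: the arithmetic heart of the argument was already carried out in Lemmas~\ref{lem-011} and~\ref{lem-012}, and the only new ingredient is the identification $T_{(0,3,m)}^\perp = \{0\} \cup T_{(0,1,m)}$ via the involution $j \mapsto n-j$. The modest but decisive gain of appending $0$ extends a run of $\delta-1$ consecutive elements to length $\delta$, which is precisely what improves the BCH bound by one over the corresponding bound for $\C_{(0,1,m)}$ in Theorem~\ref{thm01}.
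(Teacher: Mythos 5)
Your proof is correct, and it reaches the stated bound by a route that is essentially the mirror image of the paper's. The paper never writes down the defining set of $\C_{(0,3,m)}^\perp$ explicitly: it passes to the complement code $\C_{(0,3,m)}^c$ with generator polynomial $(x-1)g_{(1,2,m)}(x)$, i.e.\ defining set $\{0\}\cup T_{(1,2,m)}$, invokes the fact that $\C_{(0,3,m)}^c$ and $\C_{(0,3,m)}^\perp$ have the same parameters, and gets the extra $+1$ in the BCH bound from the element $0$ together with Lemmas \ref{lem-121} and \ref{lem-122} (i.e.\ Theorem \ref{thm12}). You instead compute the dual's defining set directly as $\{0\}\cup T_{(0,1,m)}$, using the digit-complement identity $w_3(n-j)=2m-w_3(j)$ (so $-T_1=T_1$, $-T_2=T_0$ for odd $m$), and then quote Lemmas \ref{lem-011} and \ref{lem-012}. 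The two reductions are equivalent under the multiplier $j\mapsto -j$ (equivalently $x\mapsto x^{-1}$), since $-T_{(1,2,m)}=T_{(0,1,m)}$; both conclude with the same trick of appending $0$ to a run of $\delta-1$ consecutive elements of the $v$-scaled defining set and applying the BCH bound to get $d\ge\delta+1$. What your version buys is a bit more transparency and completeness: the dual's defining set is exhibited explicitly, the dimension is verified independently of Theorem \ref{thm03} via \eqref{equ-Ti}, and you settle $m=3$ separately through the example with parameters $[26,13,8]$ --- a boundary case the paper's proof formally leaves open, since Theorem \ref{thm12} (resting on Lemma \ref{lem-122}) is only stated for $m\ge 5$ while Theorem \ref{thm03dual} claims $m\ge 3$. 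The paper's version is shorter because the complement-code equivalence outsources the negation computation.
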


\begin{proof}
Note that
$$
(x-1)g_{(0,3,m)}(x)g_{(1,2,m)}(x)=x^n-1.
$$
The complement code $\C_{(0,3,m)}^c$ of $\C_{(0,3,m)}$ has generator polynomial
$(x-1)g_{(1,2,m)}(x)$ and is a subcode of $\C_{(1,2,m)}$ with
$$
\dim(\C_{(0,3,m)}^c) = \dim(\C_{(1,2,m)})-1.
$$
It is well-known that $\C_{(0,3,m)}^c$ and $\C_{(0,3,m)}^\perp$ have the same parameters \cite{HP03}. Note that $0$ is contained in the defining set of $\C_{(0,3,m)}^c$. The
desired conclusions then follow from Theorem \ref{thm12}.
\end{proof}

\subsection{The ternary cyclic codes $\C_{(1,2,m)}^\perp$ }

The parameters of the code $\C_{(1,2,3)}^\perp$  are documented in the following example
and are the parameters of the best linear codes known \cite{G}. This code is an optimal
cyclic code \cite[Table A.92]{DingBK18}.

\begin{example}
Let $m=3$. Then
the dual code $\C_{(1,2,m)}^\perp$ has parameters $[26, 12, 9]$ and generator polynomial 
$$ 
x^{14} + 2x^{13} + 2x^{11} + 2x^{10} + 2x^9 + x^8 + 2x^7 + x^6 + 2x^5 + x^4 + x^3 
    + x^2 + x + 2.
$$ 
\end{example}

\begin{theorem}\label{thm12dual}
Let $m \geq 3$ be odd.
Then the ternary cyclic code $\C_{(1,2,m)}^\perp$ has  parameters $[n, k, d]$ with
\begin{eqnarray*}
d  \geq \left\{
\begin{array}{ll}
\frac{3^{(m-1)/2}+9}{2}  & {\rm if~} m \equiv 3 \pmod{4}, \\
\frac{3^{(m-1)/2}+7}{2} & {\rm if~} m \equiv 1 \pmod{4}
\end{array}
\right.
\end{eqnarray*}
and
\begin{eqnarray*}
k=\left\{
\begin{array}{ll}
\frac{n-2}{2} & {\rm if~} m \equiv 3 \pmod{4}, \\
\frac{n}{2} & {\rm if~} m \equiv 1 \pmod{4}.
\end{array}
\right.
\end{eqnarray*}
\end{theorem}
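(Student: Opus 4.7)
The plan is to mirror the argument used for Theorem \ref{thm03dual}, swapping the roles of $\C_{(0,3,m)}$ and $\C_{(1,2,m)}$. Starting from the factorization
\[
(x-1)\,g_{(0,3,m)}(x)\,g_{(1,2,m)}(x)=x^n-1,
\]
the complement code $\C_{(1,2,m)}^c$ has generator polynomial $(x-1)\,g_{(0,3,m)}(x)$, hence it is a subcode of $\C_{(0,3,m)}$ whose defining set is $\{0\}\cup T_{(0,3,m)}$. Since $\C_{(1,2,m)}^c$ and $\C_{(1,2,m)}^\perp$ share the same parameters \cite{HP03}, it suffices to analyze the former.

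For the dimension, the extra root at $x=1$ gives
\[
\dim\bigl(\C_{(1,2,m)}^c\bigr)=n-1-\deg g_{(0,3,m)}(x)=\dim\bigl(\C_{(0,3,m)}\bigr)-1.
\]
Plugging in Theorem \ref{thm-dim43} when $m\equiv3\pmod{4}$ gives $\dim=n/2-1=(n-2)/2$, while Theorem \ref{thm-dim41} when $m\equiv1\pmod{4}$ gives $\dim=(n+2)/2-1=n/2$, matching the claimed $k$ values.

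For the minimum distance I would reuse the multiplier $v$ from Lemmas \ref{lem-031} and \ref{lem-032}, which is coprime to $n$ and therefore only permutes the defining set without changing the code's parameters (up to equivalence for the BCH bound). Since $0$ is fixed by multiplication by $v$, the shifted defining set of $\C_{(1,2,m)}^c$ contains
\[
\{0\}\cup T_{(0,3,m)}(v)\supseteq\{0,1,2,\ldots,\delta-1\},
\]
which is now a string of $\delta$ consecutive integers (one more than what was available for $\C_{(0,3,m)}$ itself). Invoking the BCH bound yields $d\ge \delta+1$, namely $d\ge(3^{(m-1)/2}+9)/2$ when $m\equiv3\pmod 4$ using $\delta=(3^{(m-1)/2}+7)/2$ from Lemma \ref{lem-032}, and $d\ge(3^{(m-1)/2}+7)/2$ when $m\equiv1\pmod 4$ using $\delta=(3^{(m-1)/2}+5)/2$ from Lemma \ref{lem-031}.

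There is no real obstacle here; the proof is a direct dualization of Theorem \ref{thm03dual}. The only point requiring attention is making sure that the consecutive run produced by the multiplier $v$ genuinely abuts $0$ (it does, because the runs in Lemmas \ref{lem-031} and \ref{lem-032} begin at $1$), so that adjoining the zero from the $(x-1)$ factor extends the run by one and improves the BCH estimate by exactly $+1$. I would also state the small case $m=3$ separately by appealing to the preceding example, since Lemma \ref{lem-032} is stated for $m>3$.
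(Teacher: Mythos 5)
Your proposal is correct and follows essentially the same route as the paper: pass to the complement code with generator polynomial $(x-1)g_{(0,3,m)}(x)$, note that adjoining $0$ to the runs $\{1,\ldots,\delta-1\}$ from Lemmas \ref{lem-031} and \ref{lem-032} under the multiplier $v$ improves the BCH bound by one, and read off the dimension from Theorems \ref{thm-dim43} and \ref{thm-dim41}. Your explicit treatment of $m=3$ via the example is a small but welcome addition of rigor, since the paper's cited Lemma \ref{lem-032} and Theorem \ref{thm03} only cover $m>3$.
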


\begin{proof}
Note that
$$
(x-1)g_{(0,3,m)}(x)g_{(1,2,m)}(x)=x^n-1.
$$
The complement code $\C_{(1,2,m)}^c$ of $\C_{(1,2,m)}$ has generator polynomial
$(x-1)g_{(0,3,m)}(x)$ and is a subcode of $\C_{(0,3,m)}$. The
desired conclusions then similarly follow from Theorem \ref{thm03}.
\end{proof}

\subsection{The ternary cyclic codes $\C_{(0,1,m)}^\perp$ }

The parameters of the code $\C_{(0,1,3)}^\perp$ are documented in the following example
and are the parameters of the best linear codes known \cite{G}. This code is an optimal
cyclic code \cite[Table A.92]{DingBK18}.

\begin{example}
Let $m=3$. Then
the dual code $\C_{(0,1,m)}^\perp$ has parameters $[26, 12, 9]$ and generator polynomial 
$$ 
x^{14} + 2x^{13} + 2x^{12} + 2x^{11} + 2x^{10} + x^9 + 2x^8 + x^7 + 2x^6 + x^5 + x^4
    + x^3 + x + 2. 
$$ 
\end{example}

\begin{theorem}\label{thm01dual}
Let $m \geq 3$ be odd.
Then the ternary cyclic code $\C_{(0,1,m)}^\perp$ has  parameters $[n, k, d]$ with
\begin{eqnarray*}
d  \geq \left\{
\begin{array}{ll}
\frac{3^{(m-1)/2}+9}{2}  & {\rm if~} m \equiv 3 \pmod{4}, \\
\frac{3^{(m-1)/2}+7}{2} & {\rm if~} m \equiv 1 \pmod{4}
\end{array}
\right.
\end{eqnarray*}
%$d \geq (3^{(m-1)/2}+5)/2$
and
\begin{eqnarray*}
k=\left\{
\begin{array}{ll}
\frac{n-2}{2} & {\rm if~} m \equiv 3 \pmod{4}, \\
\frac{n}{2} & {\rm if~} m \equiv 1 \pmod{4}.
\end{array}
\right.
\end{eqnarray*}
\end{theorem}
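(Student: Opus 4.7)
The plan is to mirror exactly the argument used in the proofs of Theorems \ref{thm03dual} and \ref{thm12dual}, but with the complementary pair $(0,1) \leftrightarrow (2,3)$ in place of $(0,3) \leftrightarrow (1,2)$.

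First I would observe that the sets $\{0\}$, $T_{(0,1,m)}$, and $T_{(2,3,m)}$ partition $\mathbb{Z}_n$, so
$$
(x-1)\, g_{(0,1,m)}(x)\, g_{(2,3,m)}(x) = x^n-1.
$$
Hence the complement code $\C_{(0,1,m)}^c$ has generator polynomial $(x-1)\, g_{(2,3,m)}(x)$; it is a subcode of $\C_{(2,3,m)}$, and its defining set is $\{0\}\cup T_{(2,3,m)}$. Its dimension is therefore $\dim(\C_{(2,3,m)})-1$. Using the standard fact (cited from \cite{HP03} in the analogous proofs) that the multiplier $-1$ gives a permutation equivalence between $\C^c$ and $\C^\perp$ for any cyclic code, $\C_{(0,1,m)}^\perp$ inherits exactly the same parameters as $\C_{(0,1,m)}^c$.

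Next I would read off the dimension from Theorems \ref{thm-dim43} and \ref{thm-dim41}: $\dim(\C_{(2,3,m)})=n/2$ when $m\equiv 3\pmod 4$ and $\dim(\C_{(2,3,m)})=(n+2)/2$ when $m\equiv 1\pmod 4$. Subtracting one gives the claimed values $k=(n-2)/2$ and $k=n/2$, respectively.

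For the minimum distance, I would apply the multiplier $v$ from Lemma \ref{lem-232} (when $m\equiv 3\pmod 4$) or Lemma \ref{lem-231} (when $m\equiv 1\pmod 4$). Those lemmas show that $v\cdot T_{(2,3,m)} \bmod n$ already contains the block $\{n-(\delta-1),\ldots,n-2,n-1\}$, where $\delta$ equals $(3^{(m-1)/2}+7)/2$ or $(3^{(m-1)/2}+5)/2$ in the two cases. Since the defining set of $\C_{(0,1,m)}^c$ additionally contains $0$, its image under multiplication by $v$ contains the $\delta$ cyclically consecutive elements $\{n-(\delta-1),\ldots,n-1,0\}$. The BCH bound then yields $d\ge\delta+1$, which is precisely $(3^{(m-1)/2}+9)/2$ when $m\equiv 3\pmod 4$ and $(3^{(m-1)/2}+7)/2$ when $m\equiv 1\pmod 4$.

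Since all three ingredients (the factorisation of $x^n-1$, the dimensions of $\C_{(2,3,m)}$, and the consecutive-run structure of $v\cdot T_{(2,3,m)}$) have already been established earlier in the paper, no genuinely new obstacle arises. The only point requiring a tiny bit of care is verifying that the ``extra'' element $0$ sits cyclically adjacent to the run $\{n-(\delta-1),\ldots,n-1\}$ supplied by Lemmas \ref{lem-231}--\ref{lem-232}, so that the BCH bound delivers the $+1$ improvement over Theorem \ref{thm23}; this is immediate because $n\equiv 0\pmod n$.
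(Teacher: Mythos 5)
Your argument is correct and is exactly the route the paper intends: the paper omits the proof of Theorem \ref{thm01dual}, stating it is similar to that of Theorem \ref{thm03dual}, and your proof reproduces that template — factor $x^n-1=(x-1)g_{(0,1,m)}(x)g_{(2,3,m)}(x)$, identify $\C_{(0,1,m)}^\perp$ with the complement code of defining set $\{0\}\cup T_{(2,3,m)}$, take dimensions from Theorems \ref{thm-dim43} and \ref{thm-dim41}, and use the extra element $0$ adjacent to the runs from Lemmas \ref{lem-231} and \ref{lem-232} to gain the $+1$ in the BCH bound.
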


\begin{proof}
The proof is similar to that of Theorem \ref{thm03dual} and omitted here.
\end{proof}

\subsection{The ternary cyclic codes $\C_{(2,3,m)}^\perp$ }

The parameters of the code $\C_{(2,3,3)}^\perp$ are documented in the following example
and are the parameters of the best linear codes known \cite{G}. This code is an optimal
cyclic code \cite[Table A.92]{DingBK18}.

\begin{example}
Let $m=3$. Then
the dual code $\C_{(2,3,m)}^\perp$ has parameters $[26, 13, 8]$ and generator polynomial 
$$
x^{13} + x^{11} + 2x^{10} + x^9 + 2x^7 + x^4 + 2x^3 + 2. 
$$ 
\end{example}

\begin{theorem}\label{thm23dual}
Let $m \geq 3$ be odd.
Then the ternary cyclic code $\C_{(2,3,m)}^\perp$ has  parameters $[n, k, d]$ with
\begin{eqnarray*}
d \geq \left\{
\begin{array}{ll}
(3^{(m-1)/2}+15)/2 & {\rm if~} m \equiv 1 \pmod{4}, \\
(3^{(m-1)/2}+13)/2  & {\rm if~} m \equiv 3 \pmod{4}
\end{array}
\right.
\end{eqnarray*}
and
\begin{eqnarray*}
k=\left\{
\begin{array}{ll}
\frac{n-2}{2} & {\rm if~} m \equiv 1 \pmod{4}, \\
\frac{n}{2} & {\rm if~} m \equiv 3 \pmod{4}.
\end{array}
\right.
\end{eqnarray*}
\end{theorem}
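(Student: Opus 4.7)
The plan is to mimic the proofs of Theorems \ref{thm03dual}, \ref{thm12dual}, and \ref{thm01dual}, treating $\C_{(2,3,m)}$ and $\C_{(0,1,m)}$ as a ``complementary pair'' in exactly the same way that $\C_{(0,3,m)}$ and $\C_{(1,2,m)}$ were treated in Theorem \ref{thm03dual}. Concretely, the defining sets $T_{(2,3,m)}$ and $T_{(0,1,m)}$ are disjoint and their union is $\{1,2,\ldots,n-1\}$, so the factorization
\[
(x-1)\,g_{(2,3,m)}(x)\,g_{(0,1,m)}(x)=x^n-1
\]
holds in $\mathbb{F}_3[x]$. Hence the complement code $\C_{(2,3,m)}^c$ has generator polynomial $(x-1)\,g_{(0,1,m)}(x)$, its defining set is $\{0\}\cup T_{(0,1,m)}$, and it has the same parameters as $\C_{(2,3,m)}^\perp$.

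The dimension follows immediately: $\dim(\C_{(2,3,m)}^\perp)=\dim(\C_{(2,3,m)}^c)=\dim(\C_{(0,1,m)})-1$, and plugging in the values from Theorems \ref{thm-dim43} and \ref{thm-dim41} yields $(n-2)/2$ when $m\equiv 1\pmod 4$ and $n/2$ when $m\equiv 3\pmod 4$, matching the claim.

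For the minimum distance I would reuse Lemmas \ref{lem-011} and \ref{lem-012} applied to $\C_{(0,1,m)}$. Take the same multiplier $v$ (namely $v=(3^{(m-1)/2}+1)/2$ for $m\equiv 1\pmod 4$ and $v=(3^{(m-1)/2}-1)/2$ for $m\equiv 3\pmod 4$) and consider $v\cdot(\{0\}\cup T_{(0,1,m)})\bmod n$. Since $v\cdot 0=0$ and those lemmas already guarantee $\{n-(\delta-1),\ldots,n-2,n-1\}\subset T_{(0,1,m)}(v)$, the image contains the $\delta$ consecutive residues
\[
n-(\delta-1),\;n-(\delta-2),\;\ldots,\;n-1,\;0
\]
modulo $n$, where $\delta$ is as in Lemma \ref{lem-011} or \ref{lem-012}. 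The BCH bound then gives $d\geq \delta+1$, which evaluates to $(3^{(m-1)/2}+15)/2$ when $m\equiv 1\pmod 4$ and $(3^{(m-1)/2}+13)/2$ when $m\equiv 3\pmod 4$, exactly the stated bounds.

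There is no serious obstacle: every ingredient (the complement-code identification, the BCH bound, the two auxiliary lemmas, and the dimension computation) is already in place. The only point that requires a small observation is that adjoining $0$ to the set of consecutive residues produced by Lemmas \ref{lem-011}--\ref{lem-012} lengthens the consecutive run by one and thereby upgrades the BCH lower bound from $\delta$ to $\delta+1$, which accounts for the ``$+2$'' gap (an extra $+1$ in $\delta$ coming from the lemmas and another $+1$ from the zero) between the distance bound for $\C_{(0,1,m)}$ and the one stated here. Since this mirrors the proof of Theorem \ref{thm03dual} verbatim, a one-sentence reference to that proof suffices.
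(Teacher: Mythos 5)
Your argument is correct and coincides with the paper's intended proof, which is stated only as ``similar to that of Theorem \ref{thm03dual}'': identify $\C_{(2,3,m)}^\perp$ (up to parameters) with the complement code generated by $(x-1)g_{(0,1,m)}(x)$, obtain the dimension from Theorems \ref{thm-dim43} and \ref{thm-dim41}, and obtain the distance bound by adjoining $0$ to the consecutive runs $\{n-(\delta-1),\ldots,n-1\}$ of Lemmas \ref{lem-011} and \ref{lem-012} and applying the BCH bound to get $d\geq \delta+1$. The only minor caveat, which the paper shares, is that Lemma \ref{lem-012} assumes $m>3$, so the case $m=3$ of the stated bound rests on the explicit $[26,13,8]$ example rather than on the lemma.
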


\begin{proof}
The proof is similar to that of Theorem \ref{thm03dual} and omitted here.
\end{proof}

\section{Summary and concluding remarks} \label{sec-conclude}

The main contributions of this paper are the construction and analysis of the four infinite families of ternary cyclic codes with
parameters $[3^m-1, k \in \{(3^m-1)/2, (3^m+1)/2\}, d]$ for odd $m$, where $d$ has a square-root-like lower bound. Their dual codes have parameters $[3^m-1, k^\perp \in \{(3^m-1)/2, (3^m-3)/2\}, d^\perp]$ for odd $m$, where $d^\perp$ also has a square-root-like lower bound. When $m=3$, these codes are among the best linear codes known and are optimal ternary cyclic codes. The authors are not aware of any ternary cyclic code that is better than a code in the four families of codes and their duals of length $3^m-1$ for odd $m$. These ternary codes presented in this paper are interesting in the sense they are the first infinite families of ternary codes codes with parameters of the form 
$[n, k \in \{n/2, (n \pm 2)/2\}, d]$, where $d$ has a square-root-like lower bound. 

The binary codes in \cite{LiuLiDing23} and the ternary codes $\C_{(i_1, i_2,m)}$ treated in this paper can certainly be generalised to cyclic codes $\C_{(i_1, i_2,q, m)}$
over $\mathbb{F}_q$ for any prime power $q$. But it is amazing that these codes $\C_{(i_1, i_2,q, m)}$
over $\mathbb{F}_q$ may have very
bad parameters for other $q$. For example, we have the following numerical results.
\begin{itemize}
\item $\C_{(2, 3, 5, 3)}$ has parameters $[124, 62, 3]$ and $\C_{(2, 3, 5, 3)}^\perp$ has parameters $[124, 62, 3]$.
\item $\C_{(0, 1, 5, 3)}$ has parameters $[124, 63, 3]$ and $\C_{(0, 1, 5, 3)}^\perp$ has parameters $[124, 61, 4]$.
\item $\C_{(0, 3, 5, 3)}$ has parameters $[124, 63, 3]$ and $\C_{(0, 3, 5, 3)}^\perp$ has parameters $[124, 61, 4]$.
\item $\C_{(1, 2, 5, 3)}$ has parameters $[124, 62, 3]$ and $\C_{(1, 2, 5, 3)}^\perp$ has parameters $[124, 62, 3]$.
\item $\C_{(1, 3, 5, 3)}$ has parameters $[124, 62, 2]$ and $\C_{(1, 3, 5, 3)}^\perp$ has parameters $[124, 62, 2]$.
\item $\C_{(0, 2, 5, 3)}$ has parameters $[124, 63, 2]$ and $\C_{(0, 2, 5, 3)}^\perp$ has parameters $[124, 61, 4]$.
\end{itemize}
In addition, the development of the lower bounds on the minimum distances of these
codes depends on the specific value of $q$. The authors do not see a uniform way for developing the lower
bounds on the binary cyclic codes in \cite{LiuLiDing23} and the lower bounds on the ternary cyclic codes in
this paper. This explains why only the ternary case was treated in this paper and the binary case was considered in
\cite{LiuLiDing23} separately.


\begin{thebibliography}{99}

%\bibitem{AK} E. F. Assmus and J. D. Key, ``Polynomial codes and finite geometries,''
%in \emph{Handbook of Coding Theory}, V. S. Pless and W. C. Huffman, Eds.
%Amsterdam, Netherlands: Elsevier, 1998, pp. 1269--1343.

%\bibitem{ACS} D. Augot, P. Charpin, N. Sendrier, ``Studying the locator polynomials of minimum weight codewords of BCH codes,''
%\emph{IEEE Trans. Inf. Theory}, vol. 38, no. 3, pp. 960--973, 1992.

%\bibitem{B67} E.R. Berlekamp, ``The enumeration of information symbols in BCH codes,'' \emph{Bell Syst. Tech. J.} vol. 46, no. 8,  pp. 1861--1880, 1967.

%\bibitem{Charpin90} P. Charpin, ``On a class of primitive BCH-codes,'' \emph{IEEE Trans. Inf. Theory}, vol. 36, no. 1, pp. 222--228, 1990.

\bibitem{Charpin} P. Charpin, ``Open problems on cyclic codes,'' in Handbook of Coding Theory, vol. 1, V. S. Pless and W. C. Huffman, Eds. Amsterdam,  The Netherlands: Elsevier, 1998, pp. 963--1063.
		
%\bibitem{Ding12} C. Ding, ``Cyclotomic constructions of cyclic codes with length being
%the product of two primes,'' \emph{IEEE Trans. Inf. Theory}, vol. 58, no. 4,
%pp. 2231--2236, 2012.	



\bibitem{DingBK18}
C. Ding, \emph{Codes from Difference Sets}. Singapore: World Scientific, 2018.

%\bibitem{Ding15} C. Ding, ``Parameters of several classes of BCH codes,'' \emph{IEEE Trans. Inf. Theory}, vol. 61, no. 10, pp. 5322--5330, Oct. 2015.

%\bibitem{DLX99}
% C. Ding, K. Y. Lam and C. Xing, ``Enumeration and construction of all
%      duadic codes of length $p^m$," {\it Fundamenta Informaticae}, vol. 38, no.
 %     1, pp. 149--161, 1999.


%\bibitem{DingPless99}
%C. Ding, V. Pless, ``Cyclotomy and duadic codes of prime lengths,"
%\emph{IEEE Trans. Inf. Theory}, vol. 45, no. 2, pp. 453--466, March 1999.


%\bibitem{DY} C. Ding and J. Yang, ``Hamming weights in irreducible cyclic codes,''
%\emph{Discrete Math.}, vol. 313, no. 4, pp. 434--446, 2013.


\bibitem{HP03} W. C. Huffman and V. Pless, \emph{Fundamentals Error-Correcting Codes}.
Cambridge, U.K.: Cambridge Univ. Press, 2003.

%\bibitem{GDL} B. Gong, C. Ding and C. Li, ``The dual codes of several classes of BCH codes,''
%	    \emph{IEEE Trans. Inf. Theory}, vol. 68, no. 2, pp. 953--964,  2022.

\bibitem{G} M. Grassl, Bounds on the minimum distance of linear codes and
		quantum codes, Online available at http://www.codetables.de, accessed on 2023-1-1.

%\bibitem{LDL} C. Li, C. Ding and S. Li, ``LCD cyclic codes over finite fields,'' \emph{IEEE Trans. Inf. Theory}, vol. 63, no. 7, pp. 4344--4356, 2017.

%\bibitem{Leon84} J. S. Leon, J. M. Masley, and V. Pless, ``Duadic codes,"
 %        \emph{IEEE Trans. Inf. Theory,} vol. 30, no. 5, pp. 709--714, Sept. 1984.

%\bibitem{Leon88} J. S. Leon, ``A probabilistic algorithm for computing
%         minimum weight of large error-correcting codes," \emph{IEEE Trans
 %        Inf. Theory,} vol. 34, no. 5,  pp. 1354--1359, Sept. 1988.



%\bibitem{LSX} S. Li, ``The minimum distance of some narrow-sense primitive BCH codes,'' \emph{ SIAM J. Discrete Math.},  vol. 31, no. 4, pp. 2530--2569, 2017.

\bibitem{LiuLiDing23}
H. Liu, C. Li, C. Ding, ``Five infinite families of binary cyclic codes and their related codes with good parameters,"
arXiv:2301.06446.

\bibitem{LLQ23}
H. Liu, C. Li, H. Qian, ``Parameters of several families of binary duadic codes and their related codes,"
arXiv:2302.13461.

%\bibitem{JPen}
%S. Noguchi, X.-N. Lu, M. Jimbo, Y. Miao,
%``BCH codes with minimum distance proportional to code length,
%\emph{SIAM J. Discrete Math.}, vol. 35,  no. 1, pp. 179--193,  2021.

%\bibitem{Ples87} V. Pless, J. M. Masley, and J. S. Leon, ``On weights
%         in duadic codes," \emph{J. Comb. Theory Ser. A}, vol. 44, pp. 6--21, 1987.

%\bibitem{Plessetal} V. Pless, J. M. Masley, and J. S. Leon, On weights in duadic codes, J. Combin. Theory Ser. A, Vol. 44, No. 1, pp. 6--21, 1987.

		
\bibitem{TD22} C. Tang, C. Ding, ``Binary $[n,(n + 1)/2]$ cyclic codes with good
minimum distances,'' \emph{IEEE Trans. Inf. Theory}, vol. 68, no. 12, pp. 7842--7849,  2022.


%\bibitem{SYW} X. Shi, Q. Yue, Y. Wu, ``The dual-containing primitive BCH codes with the maximum designed distance and their applications to quantum codes,''
%\emph{Des. Codes and Cryptogr.}, vol. 87, pp. 2165--2183, 2019.

		
%\bibitem{Xiong} M. Xiong, ``On cyclic codes of composite length and the minimum distance,"
%\emph{IEEE Trans. Inf. Theory}, vol. 64, no. 9, pp. 6305--6314,  2018.
		
%\bibitem{XiongZhang}
%M. Xiong, A. Zhang, ``On cyclic codes of composite length and the minimum distance II,"
%\emph{EEE Trans. Inf. Theory}, vol. 67, no. 8, pp. 5097--5103,  2021.
		
		
	
	\end{thebibliography}
\end{document}